\renewcommand\footnotetextcopyrightpermission[1]{}
\newcommand{\allnotes}[1]{}
\newcommand{\ata}{all-to-all }
\newtheorem{theorem}{Theorem}
\newcommand{\captionfonts}{\bf \footnotesize}
\long\def\@makecaption#1#2{%
	\vskip\abovecaptionskip
	\sbox\@tempboxa{{\captionfonts #1: #2}}%
	\ifdim \wd\@tempboxa >\hsize
	{\captionfonts #1: #2\par}
	\else
	\hbox to\hsize{\hfil\box\@tempboxa\hfil}%
	\fi
	\vskip\belowcaptionskip}
\newcommand{\squishlist}{
  \begin{list}{$\bullet$}{
    \setlength{\itemsep}{0pt}       \setlength{\parsep}{3pt}
    \setlength{\topsep}{3pt}        \setlength{\partopsep}{0pt}
    \setlength{\leftmargin}{1em}    \setlength{\labelwidth}{1em}
    \setlength{\labelsep}{0.5em} } }
\newcommand{\squishend}{
  \end{list} }
\newcommand{\squishenum}{
  \begin{enumerate}{}{
    \setlength{\itemsep}{0pt}       \setlength{\parsep}{3pt}
    \setlength{\topsep}{3pt}        \setlength{\partopsep}{0pt}
    \setlength{\leftmargin}{1em}    \setlength{\labelwidth}{1em}
    \setlength{\labelsep}{0.5em} } }
\newcommand{\squishenumend}{
\end{enumerate} }
\setlist[enumerate]{leftmargin=1.5em,topsep=3pt,itemsep=0pt}
\begin{document}

\pagestyle{plain}

\title{Efficient \ata Collective Communication Schedules for Direct-connect Topologies}
\titlenote{Distribution Statement ``A'' (Approved for Public Release, Distribution Unlimited).}

\author{Prithwish Basu}
\affiliation{%
  \institution{\small{RTX BBN Technologies}}
  \streetaddress{10 Moulton St}
  \city{Cambridge}
  \state{MA}
  \country{}
  \postcode{02138}}
\email{prithwish.basu@rtx.com}

 \author{Liangyu Zhao}
\affiliation{%
  \institution{\small{University of Washington}}
  \streetaddress{185 E Stevens Way NE}
  \city{Seattle}
  \state{WA}
  \country{}
  \postcode{98195}}
\email{liangyu@cs.washington.edu}

 \author{Jason Fantl}
\affiliation{%
  \institution{\small{RTX BBN Technologies}}
  \streetaddress{10 Moulton St}
  \city{Cambridge}
  \state{MA}
  \country{}
  \postcode{02138}}
\email{jason.fantl@rtx.com}

 \author{Siddharth Pal}
\affiliation{%
  \institution{\small{RTX BBN Technologies}}
  \streetaddress{10 Moulton St}
  \city{Cambridge}
  \state{MA}
  \country{}
  \postcode{02138}}
\email{siddharth.pal@rtx.com}

\author{Arvind Krishnamurthy}
\affiliation{%
  \institution{\small{University of Washington}}
  \streetaddress{185 E Stevens Way NE}
  \city{Seattle}
  \state{WA}
  \country{}
  \postcode{98195}}
\email{arvind@cs.washington.edu}
 
 \author{Joud Khoury}
\affiliation{%
  \institution{\small{RTX BBN Technologies}}
  \streetaddress{10 Moulton St}
  \city{Cambridge}
  \state{MA}
  \country{}
  \postcode{02138}}
\email{joud.khoury@rtx.com}

\begin{abstract}
The \ata collective communications primitive is widely used in machine learning (ML) and high performance computing (HPC) workloads, and optimizing its performance is of interest to both ML and HPC communities. All-to-all is a particularly challenging workload that can severely strain the underlying interconnect bandwidth at scale.
This paper takes a holistic approach to optimize the performance of \ata collective communications on supercomputer-scale direct-connect interconnects.
We address several algorithmic and practical challenges in developing efficient and bandwidth-optimal \ata schedules for any topology and lowering the schedules to various runtimes and interconnect technologies.
We also propose a novel topology that delivers near-optimal \ata performance. 
\end{abstract}

\maketitle

\section{Introduction}
Collective communications have received significant attention in both high performance computing (HPC) and machine learning (ML) disciplines. 
The \ata collective, in particular, is used in several HPC workloads such as with the 3D Fast Fourier Transform (FFT)~\cite{p3dfft} used in molecular dynamics~\cite{gromacs, berendsen1995gromacs} and direct numerical simulations~\cite{moin1998direct}.
It is also used in ML workloads, for example, to exchange large embeddings in the widely deployed Deep Learning Recommendation Model (DLRM)~\cite{naumov2020deep, mudigere2022software}, and in the mixture-of-experts (MoE) models~\cite{lepikhin2020gshard}. 
All-to-all collective communication is often a bottleneck at scale in these workloads~\cite{a2a-comm-bottleneck, a2a-comm-complexity, a2a-comm-bottleneck-ml}.

An emerging approach to meet these challenging demands has been to employ various forms of optical circuit switching to achieve higher bandwidths at reasonable capital expenditure and energy costs~\cite{sipml,wang2023topoopt,sip-switch,x-nest,hybrid-opt,jouppi2023tpu,lightwave-sigcomm}.
Hosts communicate using a limited number of optical circuits that may be reconfigured at timescales appropriate for the hardware (see \S\ref{subsec:direct-connect}), thus exposing network topology as a configurable component.
We refer to this setting as \emph{direct-connect} with circuits that are configured and fixed for an appropriate duration.
Direct-connect fabrics and topologies such as mesh, Tori, DragonFly~\cite{kim2008technology}, and SlimFly~\cite{besta2014slim} have been well studied in the HPC community and deployed across several supercomputers, such as with Google's TPUv4~\cite{jouppi2023tpu, lightwave-sigcomm}. 

Computing bandwidth-optimal \ata schedules on a direct-connect topology with $N$ nodes can be formulated using the Max Concurrent Multi-Commodity Flow problem, hereafter MCF, and solved in polynomial time using linear programming (LP)~\cite{shahrokhi1990maximum}. 
MCF, however, suffers from high time complexity even at modest scales since the number of flow variables in a bounded degree network scales as $\mathcal{O}(N^3)$. %
At $N=1000$, for example, even a state-of-the-art LP solver~\cite{mosek} is unable to generate a schedule on a fast machine within an entire day. For smaller $N (<100)$, which is typical of ML applications, it takes tens of minutes to generate a schedule. 
This makes it hard for the algorithm to react quickly to changes in the topology, for example, due to topology reconfiguration or failures.
{\em We enhance the scalability of the exact \ata MCF by decomposing it into a simpler master LP and a set of $N$ children LPs that are parallelized for fast computation}.
We demonstrate a $\mathcal{O}(poly(N))$ speed up in time complexity under decomposition and parallelization, reducing actual runtime on $N=1000$ by orders of magnitude to 40 minutes instead. For $N$ in the hundreds, it takes seconds to generate a schedule. 
Prior works~\cite{shahrokhi1990maximum, leighton1989approximate, fleischer2000approximating, karakostas2008faster} try to improve computational complexity by trading off optimality using approximation schemes.
These works still end up significantly underperforming our decomposed MCF in practice, both in terms of performance and complexity.

Another challenge lies in lowering the MCF solution to both ML accelerators and HPC runtimes and fabrics.
These fabrics employ different topology, routing, and flow control mechanisms as they have historically been designed with different objectives~\cite{dally2004principles}.
We devise a general model of the underlying network, distinguishing between fabrics that support additional forwarding bandwidth (i.e., forwarding bandwidth at the Network Interface Card (NIC) is higher than the injection bandwidth at the host/accelerator) and those that do not.
Additional forwarding bandwidth increases \ata performance in direct-connect settings as it compensates for the {\em bandwidth tax}~\cite{mellette2020expanding} (since a node acts as a router and uses a significant fraction of its total link bandwidth to forward other node traffic). 
We {\em develop an algorithmic toolchain for producing and lowering near bandwidth-optimal \ata collective communication schedules to arbitrary supercomputer-scale topologies and different interconnect technologies}.
On host or accelerator runtimes where data movement is ``scheduled'', we devise a novel time-stepped version of the MCF problem. %
On fabrics with hardware ``routing'' and additional forwarding bandwidth, we develop scalable algorithms for computing static routes either by directly extracting the paths from the MCF solution or by employing path-based MCF formulations where flow variables are defined on paths instead of on links. 
We develop compilers and tools for lowering the schedules and the routes to the underlying runtime and interconnect, and we demonstrate near-optimal \ata performance on a range of topologies at different scales.

Finally, {\em we establish an analytical lower bound for \ata performance on any topology, use it to compare different topologies and show the superiority of generalized Kautz graphs in terms of both performance and coverage}.
It is known that topologies with higher bisection bandwidth result in higher \ata throughput~\cite{besta2014slim, jouppi2023tpu}.
Several works in the HPC community have investigated the \ata behavior of different topologies. 
Earlier works proposed specialized patterns for higher dimensional mesh, tori~\cite{suh1998all} and hypercubes~\cite{johnsson1989optimum}, while later works proposed more complex topologies that have beneficial graph properties, e.g., high expansion coefficient~\cite{valadarsky2016xpander}, large spectral gap~\cite{young2022spectralfly}, and low diameter~\cite{besta2014slim}. 
Many of the proposed topologies, however, do not have sufficient {\em coverage} in realizable graph sizes $(N)$ and degree $(k)$. 
We propose the class of generalized Kautz (GenKautz) graphs~\cite{genkautz}, which are known for their expansion properties and can be constructed for any $N$ and $k$.

\section{Background and Terminology}
\subsection{Direct-Connect Fabrics for ML and HPC}\label{subsec:direct-connect}
Our work identifies topologies and schedules helpful for a broad range of direct-connect interconnects common to both HPC and ML accelerator fabrics.
These include, for example, \emph{switchless physical circuits}~\cite{rockportnic}, \emph{patch-panel optical circuits}~\cite{patchpanel}, and \emph{optical circuit switches} (OCS)~\cite{jouppi2023tpu}. 
These options differ in cost, scalability, and reconfigurability~\cite{topoopt}. 
For example, commercially available OCSs can perform reconfigurations in $\approx$10ms, are more expensive than patch panels, but scale to fewer ports (e.g., Polatis 3D-MEMS switch has 384 ports at \$520 per port~\cite{polatis}). 
With these reconfigurable fabrics, topology becomes a degree of freedom, and ongoing work is demonstrating how to exploit this degree of freedom for increased performance~\cite{jouppi2023tpu, sip-switch,sipml,wang2023topoopt, zhao2022optimal}.
Despite supporting faster reconfigurations,  OCSes still suffer from relatively high reconfiguration latency, precluding rewiring of the circuits during a typically-sized collective operation. 
Accordingly, collectives need to operate over a set of pre-configured circuits that remain unchanged for the duration of the collective operation.
We refer to this setting as \emph{direct-connect}, circuits (and topology) that are configured and remain static for the duration of the collective algorithm.

Our work additionally targets different interconnect technologies, broadly ML accelerator and HPC interconnects. 
These employ different topologies, routing, and flow control, as they have historically been designed with different objectives~\cite{dally2004principles, naumov2020deep}.
Table~\ref{tbl:comparison_fabric} highlights high-level differences between the two fabrics.
HPC interconnects have generally focused on reducing latency using low-diameter topologies with high bisection bandwidth and hardware routing with cut-through flow control. 
With hardware routing, where each node or NIC serves as a router, the total forwarding bandwidth may exceed the host injection bandwidth to accommodate for the forwarding bandwidth tax.
ML accelerator interconnects, on the other hand, optimize for high link bandwidth as they are mostly focused on collectives, tend not to employ hardware routing, and use synchronized accelerator schedules with store-and-forward flow control. 
\begin{table}[b]
\small
\caption{Comparison of HPC and ML accelerator fabrics.}
\begin{center}
\scalebox{0.9}{
\begin{tabular}{|c|c|c|}
\hline 
& {\bf HPC} & {\bf ML Accelerator}\\ \hline
{\bf Schedules} & Path-based & Link-based \\ \hline
{\bf Topology focus} & Bisection bandwidth & Node bandwidth \\ \hline
{\bf Flow Control} & Cut-through & Store-and-forward \\ \hline
{\bf Injection BW} & $B$ & $B$ \\ \hline
{\bf Forwarding BW} & $\ge B$ & $B$ \\ \hline
\end{tabular}
}
\end{center}
\label{tbl:comparison_fabric}
\end{table}%
\subsection{All-to-all Schedules, and Throughput}\label{sec:schedules_throughput}
The network topology is modeled as a directed graph, represented as the tuple $G=(V,E)$, where $V$ denotes the set of nodes ($|V|=N$) and $E$ denotes the set of directed edges. 
The direct-connect fabric imposes a constraint that all nodes have degree $d$, which is the number of links/ports on each host or accelerator and is ideally low and independent of $N$. 
The link bandwidth is $b$, and the node bandwidth is $B=db$.

Each node $i$ has a data buffer $B_i$ comprised of $N$ contiguous and equally sized shards $B_{i,j}$ each of size $m$ bytes, $0 \leq i,j < N$, $|B_i|=Nm$, $|B_{i,j}|=m$.
The \ata collective \textit{transposes} the buffers, i.e., each node $i$ sends shard $B_{i,j}$ to node $j$. %

Communication schedules can operate at a finer granularity than a shard. 
We define chunk $C_{i,j}$ to be a subset of shard $B_{i,j}$, both specified as \emph{index sets} of elements in a shard with $B_{i,j}$ representing the entire shard. 
For example, the shard can be an interval $[0,1]$, and $C_{i,j}$ be some subinterval. 
Chunks do not need to be the same size.
Since each chunk $C_{i,j}$ has a known source node $i$ and destination node $j$, we omit the indexes and simply use $C$ to denote the chunk.
An \ata {\em communication (comm) schedule} $A$ for $G$ with $t_{\max}$ {\em comm steps} specifies which chunk is communicated over which link or route in any given {\em step}. 
Specifically, $A$ is a set of tuples $(C,(u,w),t)$ with $u,w\!\in\! V$ and $t\!\in\!\{1,\dots,t_{\max}\}$.
$(C,(u,w),t)$ denotes that node $u$ sends chunk $C$ to node $w$ at comm step $t$.
Chunking is performed during schedule compilation (\S\ref{sec:compiler}).

\noindent{\bf Link-based Schedules:} In fabrics without hardware routing, chunks only flow on directly connected edges $(u,w)\!\in\! E_G$. 

\noindent{\bf Path-based Schedules:} In fabrics with hardware routing, $(u,w)$ may not correspond to an edge in $G$, i.e., chunks can flow on end-to-end paths between source and destination as determined by the routing function. 

The {\em throughput} of an \ata schedule for a shard size $m$ is $\frac{(N-1)m}{T}$, where $T$ is the time to complete the \ata schedule (the time for each node to send $N-1$ shards each of size $m$ bytes). 

Finally, %
 {\em algorithm runtime} is the time taken by the {\em algorithm} to compute and lower the schedule for a given network.

\vspace{-2ex}
\subsection{Related work}\label{sec:related}
In the theory community, optimization of the \ata collective has been formulated as a maximum concurrent multi-commodity flow problem (MCF) and solved in polynomial time using LP~\cite{shahrokhi1990maximum}. Although the MCF has polynomial time complexity, it can be hard to solve in practice for large problem sizes. As a result, several works have proposed fully polynomial time approximation schemes (FPTAS)~\cite{shahrokhi1990maximum, fleischer2000approximating, karakostas2008faster}. The best known FPTAS schemes~\cite{karakostas2008faster} have time complexity 
$\mathcal{O}\left( \frac{N^3}{\epsilon ^2} \log^{\mathcal{O}(1)}N  \right)$ and get to a factor of $(1-\epsilon)$ of the optimal throughput. In this paper, we improve the tractability of LP-based solutions while not sacrificing optimality. We decompose the original MCF problem into a master LP and $N$ simpler parallelizable child LPs. Since the former (which dominates the time complexity -- see Fig.~\ref{fig:comptime_genkautz}) has $\mathcal{O}(N^2)$ variables, one can leverage recent LP solving techniques with time complexity $\mathcal{O}(N^{2.37})$~\cite{Cohen19} to solve the MCF in $\mathcal{O}(N^{4.74})$ time. In practice, our master LP has lower time complexity owing to its special structure, and MCF is significantly better in running time than the FPTAS schemes (for small values of $\epsilon$) without sacrificing optimality even for moderate $N$ (Fig.~\ref{fig:comptime_genkautz}). Moreover, the \emph{sequential} FPTAS schemes are unable to exploit the parallelism the way we do. %

Early HPC works investigated efficient \ata collective communication on well-known topologies, e.g., hypercubes, meshes, and tori. Johnsson and Ho~\cite{johnsson1989optimum} proposed optimal \ata collectives for single-port and $n$-port models of hypercubes. Scott~\cite{scott1991efficient} proposed optimal \ata collectives on meshes. Suh et al.~\cite{suh1998all} and Yang et al.~\cite{yang1999efficient} proposed \ata collectives for mesh and tori that could leverage virtual cut-through and wormhole-switched networks.

More recent works have studied \ata communication on topologies that have beneficial graph properties for supporting datacenter communications. The bisection bandwidth of a network ($\chi$) is known to be related to \ata throughput in the sense that the latter is bounded from above by $\frac{4\chi}{N^2}$.
Prior works have therefore used $\chi$ as a proxy for \ata throughput~\cite{valadarsky2016xpander,singla2012jellyfish,besta2014slim}, and
as a result, expander graphs got significant interest due to their low modularity and hence high $\chi$.
Xpander~\cite{valadarsky2016xpander} routes \ata traffic along K-shortest paths on expander graphs with multi-path TCP congestion control~\cite{mptcp} to yield good throughput in switch-based datacenter settings.
The \ata problem has been formulated as an MCF in such contexts~\cite{jyothi2016measuring,poutievski2022jupiter}, and it has been shown that
multiple expanders have nearly identical performance for \ata traffic.
However, ours is the first study that applies multiple forms of MCF constructs (link- and path-based) to optimize \ata collective communications on a diverse set of HPC and ML fabrics and topologies at scale.

Recently, Cai et al.~\cite{SCCL} proposed an SMT-logic-based approach (SCCL) for synthesizing optimal collectives in a topology-agnostic manner for GPU fabrics. However, their approach is computationally expensive due to the NP-hard nature of the SMT formulation. Followup work TACCL~\cite{TACCL} relies on integer programming and suffers from similar computational bottlenecks, as we show in \S\ref{sec:eval}.
Recently proposed TE-CCL~\cite{arzani2023rethinking} improves upon TACCL's performance by combining multi-commodity flow with Mixed Integer Linear Programming (MILP) and A* search. 
Their models focus on link-driven latency, which can be important at small sub-Megabyte buffer sizes. 
Our formulations, on the other hand, maximize network utilization for \ata under large buffer sizes, and we observe that 
MCF solutions in general attempt to take short paths through the network anyway. 
Our approach is significantly more scalable, generating efficient schedules for 1K+ nodes in much less time than what TE-CCL reports it takes to solve \ata on 128 node networks.
Finally, the work in~\cite{zhao2022efficient} optimizes the all-reduce collective.

\vspace{-2ex}
\section{Multi-commodity Flow-based Algorithms}
\begin{figure}
\centering
\includegraphics[width=0.9\columnwidth]{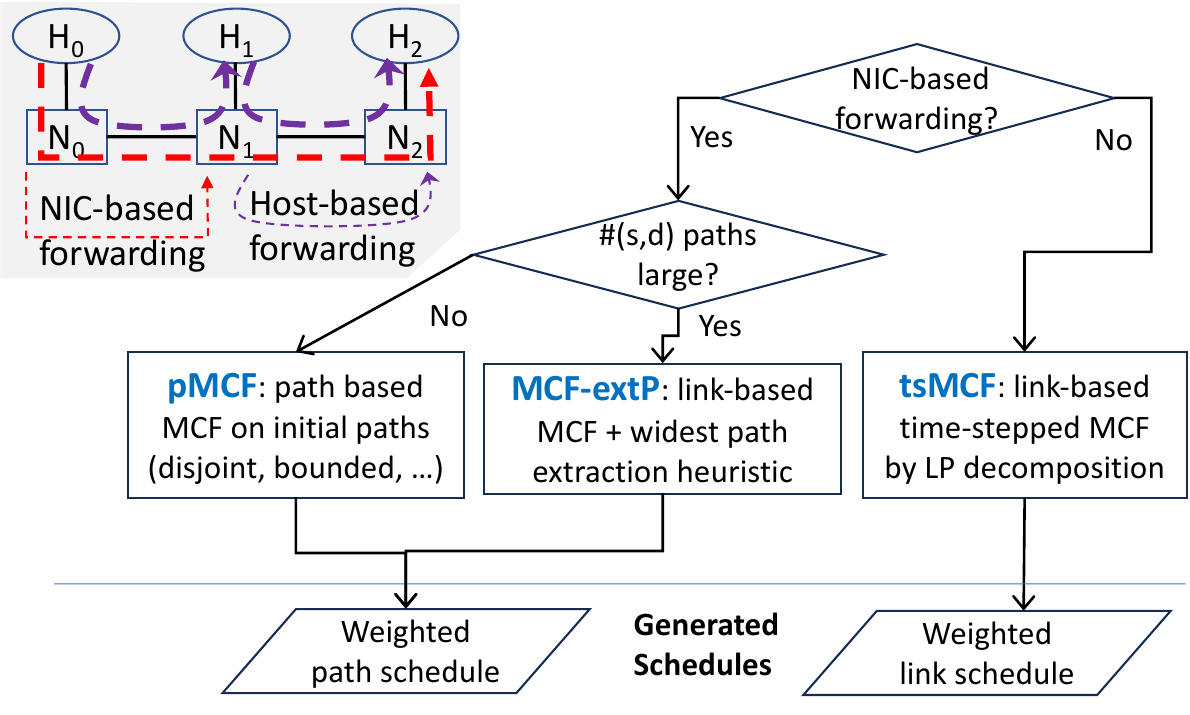}
\caption{\label{fig:flowchart} Generating link- and path-based schedules. Top left example shows difference between NIC-based and host-based forwarding of flow from host $\text{H}_0$ to $\text{H}_2$.}
\end{figure}
Fig.~\ref{fig:flowchart} shows a summary flowchart of the algorithms we employ for generating \ata\ schedules for direct-connect fabrics.
For ML-style fabrics with host/GPU-based forwarding,
we generate weighted link-based schedules by solving the time-stepped version of the MCF. %
An MCF solution defines what chunks of data corresponding to a certain $(s,d)$ pair (or commodity) should be transmitted by an intermediate node $u$ over each of its outgoing links $(u,v)$ at time step $t$. 
A naive solution %
involves solving a linear program (LP) on variables defined for each commodity, link, and time step--in the worst case, the total number of variables grows as $N(N-1)\times O(N) \times O(N) = O(N^4)$ where $N$ is the network size (bounded degree networks have $O(N)$ links and the number of time steps, $l_{max} \geq$ the diameter, which can be $O(N)$). 
We propose to {\em decompose} this LP into a {\em master} source-only LP that first computes aggregate optimal flow rates leaving each source $s$ and then uses this solution to compute optimal flow rates for each $(s,d)$ pair. 
This enables scaling to networks with thousands of nodes.

For HPC-style fabrics with NIC-based forwarding,
we generate path-based schedules that constitute a set of paths ${\mathcal P}_{s,d}$ for each $(s,d)$ pair and weights $w_{p_i}$ associated with each path $p_i\in{\mathcal P}_{s,d}$ controlling the fraction of traffic that should be sent along $p_i$. %
Optimal path-based schedules can be computed by solving the path-based version of the MCF, which is a natural \textit{dual} of the link-based version mentioned earlier. 
However, this involves defining optimization variables for every possible $(s,d)$ path, which is prohibitive for many topologies, even if we restrict the path set to include only shortest paths. %
We use good heuristics like sampling good path sets of small cardinality (e.g., edge-disjoint paths) to mitigate this problem. 
We also propose another radically different approach that instead solves the link-based MCF, and then applies an iterative ``widest path'' extraction algorithm to greedily extract high-flow $(s,d)$ paths from the optimal per-link flows. 
Although potentially suboptimal, this approach is tractable and has good performance on the topologies we study. 
\vspace{-2ex}
\subsection{Problem formulation}
\label{subsec:LP-basic}
\subsubsection{Link variable based MCF formulation}
\label{subsec:LP-basiclink}
Given a network $G=(V,E,cap:E\to\mathbb{R}^+)$, where $cap$ denotes link capacities, the problem of maximizing \ata throughput can be modeled as a \textit{maximum concurrent multi-commodity flow} (MCF) problem with $N(N-1)$ commodities %
of equal demand. This problem can be formulated using Linear Programming~\cite{shahrokhi1990maximum, leighton1989approximate, fleischer2000approximating, karakostas2008faster}. We define variables $f_{(s,d),(u,v)}$ to denote the amount of flow of commodity $s\to d$ that should traverse link $(u,v)$ and {\em concurrent} demand variable $F$ (i.e., the common rate at which all commodities will flow concurrently), and solve the LP below.
\begin{small}
\begin{center}
\noindent\textbf{\underline{Link-based max-concurrent MCF formulation:}}
\end{center}
\vspace{-2ex}
\begin{align}
&\mbox{maximize }  F \label{eq:mcmcf-F} \\
&\mbox{subject to:} \sum_{s,d} f_{(s,d),(u,v)} \leq cap_{(u,v)}, \forall u,v \label{ineq:mcmcf-capacity} \\
&\sum_{v} f_{(s,d),(u,v)} \displaystyle\leq \sum_{w} f_{(s,d),(w,u)}, \forall s,d,u:s\neq u,d\neq u \label{ineq:mcmcf-flow-conservation} \\
&\sum_{w} f_{(s,d),(w,d)} \geq F, \forall s,d \label{ineq:mcmcf-demand-conservation} \\
&f_{(s,d),(u,v)} \geq 0, \forall s,d,u,v \label{ineq:mcmcf-non-negativity}
\end{align}
\end{small}
The flow conservation constraint is modeled by inequality~\eqref{ineq:mcmcf-flow-conservation}. 
This improves the speed of the LP solver; at the optimal solution, the inequality is enforced with no slack. %
Also, enforcing the demand constraint~\eqref{ineq:mcmcf-demand-conservation} only at the sink node $d$ is sufficient since the combined flow conservation and demand constraints at the sink enforce the same at the source. If however, a flow $f_{(s,d)}$ with optimal $F$ returned by the solver has extra flow near $s$ (due to inequality \eqref{ineq:mcmcf-flow-conservation}), a post-processing step from $d$ to $s$ is executed to ensure exact flow conservation. An optimal flow generally follows links along multiple paths over the network.
This LP is solvable in polynomial time, albeit in high-order polynomial time. To improve solver efficiency, we use a compact formulation of the LP in which all the flow conservation and demand constraints are expressed by a single matrix-vector constraint that relates the product of the node-to-link incidence matrix and link-flow vector to the per-commodity demand matrix scaled by $F$. This eliminates the ``pre-solve'' canonicalization step.

A key disadvantage of the per-commodity based LP approach is that the number of link-flow variables for a $k$-regular graph is $k N^2 (N-1)$, which gets intractable even at modest scales (hundreds of nodes).
\subsubsection{Decomposing the MCF LP for scalability}
\label{subsec:LP-decomposition}
Since the MCF LP approaches discussed above are computationally challenging, we decompose the problem of computing the optimal flow for $N(N-1)$ commodities by considering $N$ groups of source-rooted flows (each delivered to $N-1$ destinations). Specifically, we follow the steps below.

\noindent\emph{(1) Compute source-based grouped commodity flows:} Solve a master LP, defined in \eqref{eq:Dmcmcf-source-F}-\eqref{ineq:Dmcmcf-source-non-negativity}, for computing the optimal concurrent rates for $N$ source-rooted grouped multicommodity flows.
The source-based flow conservation~\eqref{ineq:Dmcmcf-source-flow-conservation} reflects the fact that the total amount of flow entering $u$ has to be greater than the sum of the amount of flow leaving $u$ and the amount sunk at $u$ (which must equal the concurrent flow value $F$).
Since we worry about only $N$ groups of commodities (instead of $N(N-1)$ point-to-point commodities), we only need $k N^2$ variables, which is tractable (thousands of nodes). 

\noindent\emph{(2) Compute optimal per-commodity flows:} Once the per-source optimal flow has been computed, solve $N$ additional simpler Child LPs, one per source as defined in \eqref{eq:Dmcmcf-s-f}-\eqref{ineq:Dmcmcf-s-non-negativity}, to determine the flow values per link for each $(s,d)$ commodity. 
Each such LP (say for source $s$) will set the link capacities to the flow values computed by the master LP, and will solve a standard maximum concurrent multicommodity flow (on a thusly capacity-adjusted graph) for $N-1$ commodities $\{s\to v | v \in V\setminus\{s\}\}$. These LPs can be run in parallel on multi-core processors, have $k N(N-1)$ flow variables to optimize, and are generally simpler in complexity than the original LP. 
Solving $N + 1$ LPs with $O(N^2$) variables each is much more tractable than solving a single LP with $O(N^3)$ variables since the computation complexity of LP is generally much higher than linear in the number of variables.
\newpage
\begin{small}
\begin{center}
\noindent\textbf{\underline{Decomposed link-based MCF (for scalability):}}
\end{center}
\emph{Master LP to compute source-based grouped commodity flows:}
\begin{align}
&\mbox{maximize }  F \label{eq:Dmcmcf-source-F} \\
&\mbox{subject to:} \sum_{s} f'_{s,(u,v)} &\leq& \quad cap_{(u,v)}, \forall u,v \label{ineq:Dmcmcf-source-capacity} \\
&F + \sum_{v} f'_{s,(u,v)} &\leq& \quad \sum_{w} f'_{s,(w,u)}, \forall s,u:s\neq u \label{ineq:Dmcmcf-source-flow-conservation} \\
&f'_{s,(u,v)} &\geq& \quad 0, \forall s,u,v \label{ineq:Dmcmcf-source-non-negativity}
\end{align}
\emph{Child LPs to extract link flows from source-based flows:}
\begin{align}
&\mbox{minimize }  \sum _{d,u,v} f_{(s,d),(u,v)} \label{eq:Dmcmcf-s-f} \\
&\mbox{subject to:} \sum_{d} f_{(s,d),(u,v)} \quad \leq \quad f'_{s,(u,v)}, \forall u,v \label{ineq:Dmcmcf-s-capacity} \\
&\sum_{v} f_{(s,d),(u,v)} \quad \leq \quad \sum_{w} f_{(s,d),(w,u)}, \forall d,u:s\neq u,d\neq u \label{ineq:Dmcmcf-s-flow-conservation} \\
&\sum_{w} f_{(s,d),(w,d)} \quad \geq \quad F, \forall d \label{ineq:Dmcmcf-s-demand-conservation} \\
&f_{(s,d),(u,v)} \quad \geq \quad 0, \forall s,d,u,v \label{ineq:Dmcmcf-s-non-negativity}
\end{align}
\end{small}
The decomposed LP approach yields the optimal MCF value $F$ as the standard approach (in \S\ref{subsec:LP-basiclink}) although the actual flow values $f$ returned may be different.

\subsubsection{Time-stepped MCF (tsMCF) formulation}
\label{subsec:LP-timestepped}
The MCF formulation presented in \S\ref{subsec:LP-basiclink} yields the optimal rates at which each commodity should be transmitted by considering the flow of data to mimic that of infinitesimally divisible fluids. This is inadequate for ML network fabrics where accelerators send finite data chunks in a finite number of fixed-length time steps. This necessitates the generation of time-stepped schedules. To this end, we extend the notion of MCF to the temporal domain by computing flows on a time-expanded stacked graph~\cite{temporalG} representation of $G$. It has $l_{\max}+1$ time-indexed instances $u_t$ of each node $u\in G$ and directed edges $u_t \to v_{t+1}$ with $capacity=1$ whenever $(u,v)\in G$ as well as ``self" edges $u_t \to u_{t+1}$ with $capacity=\infty$ denoting potential buffering at $u$ over time. $l_{\max}$ is set to a value $\geq diameter(G)$. We compute flows on this time-expanded graph essentially following the procedure described in \S\ref{subsec:LP-basiclink}. The main difference is in the size of the input graph. In this case, it has $(l_{\max}+1) |V|$ nodes and $l_{\max} (|V| + |E|)$ links, and hence the LPs take somewhat longer to solve. However, the time-expanded graphs are directed acyclic graphs and are hence less complex. This tends to help mitigate the running time issues of the LPs. Another key difference is that all nodes do not source/sink traffic; instead, the nodes $u_0$ source traffic for nodes $v_{l_{\max}+1}$ and non-zero flow along an infinite capacity ``self'' edge essentially simulates waiting for a time slot.
The equivalent time-stepped LP formulation is given below.
\newpage
\begin{small}
\begin{center}
\noindent\textbf{\underline{tsMCF: for ML fabrics with host/GPU forwarding}}
\end{center}
\vspace{-2ex}
\begin{align}
&\mbox{minimize } \sum _t U_t \label{eq:mcmcf-Ut} \\
&\mbox{subject to:} \sum_{s,d} f_{(s,d),(u,v),t} \leq U_t, \forall u,v,t \label{ineq:mcmcf-t-capacity} \\
&\sum_{t' \leq t, v} f_{(s,d),(u,v),t'} \leq \sum_{t'' < t,w} f_{(s,d),(w,u),t''},\notag\\
&\qquad\qquad\qquad\forall s,d,u,t:s\neq u,d\neq u, t > 1 \label{ineq:mcmcf-t-flow-conservation} \\
&\sum _{t',v} f _{(s,d),(u,v),t'} = \sum _{t'',w} f_{(s,d),(w,u),t''}, \forall s,d \label{eq:mcmcf-t-sd} \\
&\sum_{t',v} f_{(s,d),(s,v),t'} = \sum_{t'',w} f_{(s,d),(w,d),t''}=1, \forall s,d \label{ineq:mcmcf-t-demand-conservation} \\
&0 \leq f_{(s,d),(u,v),t} \leq 1, \forall s,d,u,v \label{ineq:mcmcf-t-non-negativity}
\end{align}
\end{small}
The objective \eqref{eq:mcmcf-Ut} minimizes the utilization of bandwidth at every time step, while the constraint~\eqref{ineq:mcmcf-t-capacity} ensures that the total utilization at every edge is less than the bandwidth. The constraint~\eqref{ineq:mcmcf-t-flow-conservation} enforces the amount of data received by node $u$ must be greater than or equal to the amount of data sent by node $u$ from comm step $1$ to every other comm step (the difference is reserved for future send). Constraint~\eqref{eq:mcmcf-t-sd} enforces the total amount received by node $u$ is equal to the total amount sent by node $u$. Constraint~\eqref{ineq:mcmcf-t-demand-conservation} enforces that the total amount sent by the source and received by the destination is equal to 1. One can add a multiplier to $f_{(s,d),(u,v),t}$ in the first constraint if link bandwidth is not uniform among all links.
This time-stepped LP can be decomposed into a source-based LP + child LPs as described in \S\ref{subsec:LP-decomposition}.
\subsubsection{Path-variable based MCF (pMCF) formulation}
\label{subsec:LP-basicpath}
In networks supporting multi-hop routing, we need to compute the optimal rates of flows along multiple paths from each source $s$ to each target node $d$. 
We first compute a set of paths ${\mathcal P}$ %
for each commodity/$(s,d)$ pair. 
Next, for each path $p\in {\mathcal P}$ and $(s,d)$ pair, define flow variable $f_{(s,d), p} \in \mathbb{R}^+ \cup \{0\}$. 
As in the link-based formulation, we ensure that the link capacity constraints are obeyed at each link. 
The flow conservation constraints are automatically obeyed at each node since data will be flowing along simple paths (in-degree and out-degree are 1).
The LP formulation is shown in (\ref{eq:mcmcf-path-F})-(\ref{ineq:mcmcf-path-non-negativity}).

If the set ${\mathcal P}$ is allowed to consist of all paths of unbounded length, then path-based MCF is a natural dual of link-based MCF and hence provides the same optimal MCF value. 
However, solving such a dual problem is impractical since $|{\mathcal P}|$ typically grows exponentially with $N$. 
We make the path-based MCF in practical scenarios tractable by curtailing the number of paths in ${\mathcal P}$ to $O(poly(N))$. 
Restricting the path lengths to below $l_{\max}$ drastically reduces $|{\mathcal P}|$. 
This approach works for many networks of interest (per our empirical observation), e.g., expander graphs like Generalized Kautz graphs, where $|{\mathcal P}|$ is polynomial in $N,l_{\max}$. However, in several other graphs that possess a high degree of symmetry, e.g., the torus, the number of paths of length $l_{\max}$ grows exponentially in $l_{\max}$ since the diameter is $\sqrt{N}$; therefore $|{\mathcal P}| \geq \frac{1}{\sqrt{N}+1} {2\sqrt{N} \choose \sqrt{N}}$, which grows super-exponentially in $N$; this makes the approach intractable for large tori. One tractable heuristic that we have empirically observed to achieve the optimal MCF solution is to choose ${\mathcal P}$ to be a maximal set of link-disjoint $(s,d)$ paths, which can be found efficiently and whose cardinality is upper bounded by $k N(N-1)$ for $k$-regular graphs.
\begin{small}
\begin{center}
\noindent\textbf{\underline{pMCF: for fabrics with NIC forwarding}}
\end{center}
\vspace{-2ex}
\begin{align}
&\mbox{maximize }  F \label{eq:mcmcf-path-F} \\
&\mbox{subject to:} \sum _{s,d} \sum_{p \ni e} f_{(s,d),p} \leq cap_{e}, \forall e \in E \label{ineq:mcmcf-path-capacity} \\
&\sum_{p \ni {\mathcal P}_{(s,d)}} f_{(s,d),p} \geq F, \forall s,d \label{ineq:mcmcf-path-demand-conservation} \\
&f_{(s,d),p} \geq 0, \forall s,d,u,v \label{ineq:mcmcf-path-non-negativity}
\end{align}
\end{small}
\vspace{-3ex}
\subsection{Applying MCF to the different fabrics}
\subsubsection{Source-routed fabrics}
\label{subsec:source-routed-fabrics}
The paths along which (certain fractions of) each commodity would flow must be provided at the respective sources of the commodity. We propose two different approaches below based on whether a topology has low or high path diversity (see Fig.~\ref{fig:flowchart}).

\noindent\textbf{\emph{pMCF: Directly applying path-variable based MCF (low path diversity).}}
The path-variable based MCF formulation described in \eqref{eq:mcmcf-path-F}-\eqref{ineq:mcmcf-path-non-negativity} can be applied to source-routed fabrics for graphs in which the path diversity does not grow exponentially in $N$, as is the case with expander graphs. 
However, this approach is not tractable for graphs like the multi-dimensional torus where the number of paths (with length $\leq l_{\max}$) grows super-exponentially in $N$. 

\noindent\textbf{\emph{MCF-extP: Applying link-based MCF and extracting paths (high path diversity).}}
We first solve the decomposed link-based MCF described in %
\S\ref{subsec:LP-decomposition}.
However, for source-routed fabrics, we need to obtain the paths on which the different commodities should flow. 
Therefore, as a final step, we greedily extract paths from the link-based MCF solution.

\noindent\textbf{\emph{Widest path extraction.}} Given the MCF flow values on each link corresponding to each $(s,d)$ commodity, we construct a weighted subgraph DAG $G_{s,d}$ of the original graph $G$ induced by the edges with non-zero $(s,d)$ flow (weights = MCF flows). We then iteratively extract $(s,d)$ paths from $G_{s,d}$ by greedily solving the \textit{widest path} problem, by making minor modifications to Dijkstra's shortest path algorithm:
\vspace{-1ex}
\begin{enumerate}
\item Find $(s,d)$ path $p$ in $G_{s,d}$ with maximum flow rate ($r$).
\item Subtract  $r$ from the capacities of all the links in $p$.
\item Repeat the two previous steps until $s$ no longer has a non-zero capacity path to $d$. 
\item Upon termination, the algorithm finds a set of $(s,d)$ paths with decreasing flow rates, which are ready for lowering.
\end{enumerate}
\subsubsection{Non source-routed fabrics (tsMCF)}
\label{subsec:non_source-routed-fabrics}
When forwarding and flow control are performed by the host/GPU (no hardware routing), %
we use the time-stepped link-based MCF formulation described in \S\ref{subsec:LP-timestepped} 
to obtain chunk schedules that exactly specify what data needs to be sent (or received) by each GPU at every time step. 

\noindent\textbf{\emph{Handling host-to-NIC bottlenecks}}
In scenarios where the host-to-NIC bandwidth $B_{host}$ is less than the net egress/ingress link bandwidth at the NIC, i.e., $B_{host} < d \cdot b$, the host-to-NIC bandwidth becomes a bottleneck.
Figure~\ref{fig:with_fw_bw_bottlenecked} shows how to augment the original NIC topology to model this host-to-NIC bottleneck. 
We augment the graph in a manner that forces data to flow through the host even when the node is not the destination. 
The MCF computed between the host nodes on the augmented graph yields the optimal throughput. 
We apply \S\ref{subsec:LP-timestepped}'s tsMCF formulation to generate schedules.
\vspace{-2ex}
\begin{figure}[htbp]
\centering
\includegraphics[width=0.8\columnwidth]{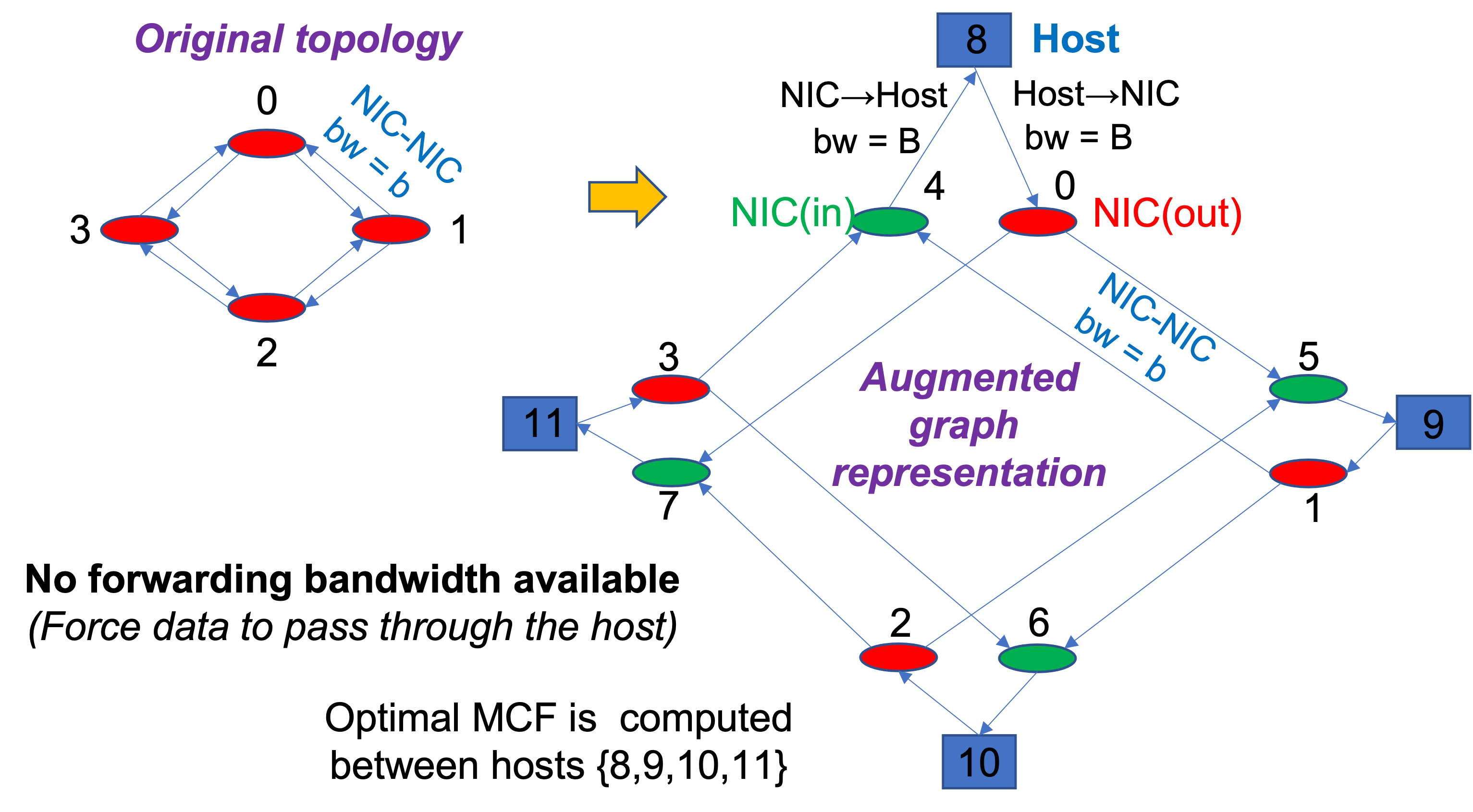}
\caption{Topology augmentation to model host-to-NIC bottleneck} %
\label{fig:with_fw_bw_bottlenecked}
\end{figure}
\vspace{-2ex}

\vspace{-2ex}
\section{Schedule Compilation}\label{sec:compiler}
We implement compilers and interpreters to lower the schedules and paths, %
and execute them on CPU and GPU runtimes.

\noindent {\bf Link-based Schedules:} The link-based MCF algorithm produces schedules that are chunked and lowered to both Microsoft's MSCCL~\cite{msccl} and Intel's oneCCL~\cite{oneccl}. 
MSCCL is an open-source collective communication library that extends NCCL~\cite{nccl} and RCCL~\cite{rccl} with an interpreter providing the ability to program custom collectives on GPUs.
Collective schedules are defined in XML as instructions (send/receive/reduce/copy) within GPU thread blocks that the interpreter executes.
We additionally lower the same schedules to oneCCL+libfabric~\cite{oneccl}, an open-source collective communications library by Intel that supports CPUs. 

We extended oneCCL with an interpreter, in a similar way that MSCCL extended NCCL, that executes the XMLs. 
The oneCCL XMLs similarly specify instructions (send, receive, reduce, copy, sync) and add scratch buffers for chunk forwarding.
The primary challenge in creating both MSCCL and oneCCL schedules was chunking.
The MCF solution produces the fractional rates $f_{(s,d),(u,v),t}$ for each commodity $(s,d)$ on each link $(u,v)$ at each time step $t$.
The lowering algorithm determines the smallest chunk size to support the lowest such rate, which guides how granularly a shard is chunked and how chunks are combined and forwarded by intermediate ranks.
At each time step, a rank then sends the total outgoing flow (the chunks received in the previous time step), receives the total incoming flow (chunks to receive in the current time step), and performs synchronization.
In both MSCCL and oneCCL, we have the ability to increase the number of channels by duplicating the schedule and running a parallel copy on different threads.
All sends and receives are asynchronous with no data dependencies.

\noindent  {\bf Path-based Schedules}. 
Recall that the weighted path-based MCF algorithm produces a set of weighted paths for each commodity (shard) $B_{i,j}$ (source $i$, dest $j$). 
Weighted multi-path routing and flow control should ideally be performed by the hardware, to which we would lower the MCF schedules (the per-commodity routes and weights).
In our testbeds, we use the Cerio (Rockport) NC1225 network card~\cite{rockportnic} (described in \S \ref{sec:testbeds}).
While the current version of the Cerio card natively supports multi-path routing on user-specified routes, it does not expose the capability to program the weights per route.
This means we cannot directly use the native multi-path capability, which we disable.
We instead approximate the weighted paths MCF by: (1) lowering the routes for each commodity $B_{i,j}$ to the underlying fabric, (2) dividing each shard/commodity into a set of equal-sized chunks, and (3) steering chunks onto routes as defined in our schedule. 

Specifically, the Cerio card exposes a utility for lowering our computed source routes to the hardware, where a route specifies the egress ports on the traversed links from source to destination as well as the layer identifier for the route; the layer identifier is used to assign routes to different virtual channels in order to eliminate deadlocks~\cite{rockport-deadlock}. 
The card also allows us to steer flows to routes at the application layer.
This is possible in ROCE v2 by setting the UDP source port when creating the RDMA Queue Pair (QP), such that the tuple (src port, src IP, dst IP, QP number) hashes to the desired route id. 
We implemented the scheduling and flow steering functionality in Open MPI+UCX~\cite{ompi}. 
The chunked schedule specification is lowered to an XML that is executed by our interpreter.
The latter is implemented in OMPI as part of the tuned collectives component within the Modular Component Architecture (MCA).
The schedule defines the chunks and path each should take, and the extended OMPI+UCX runtime creates the right number of QPs and performs the steering.
A shard is divided into a set of equal-sized chunks as follows: we compute the highest common factor across all path weights in the MCF solution and use that as the base chunk size (call it $c$). 
Each shard of size $m$ bytes is then divided into $\lceil m/c \rceil$ chunks, and the right number of chunks is assigned to each path based on the path weight.
This approach ensures all chunks (flows) fairly share the bandwidth, approximating the ideal MCF in practice on the Cerio fabric. 
We discuss the scalability limitations of this approach in \S\ref{sec:eval:discussion}.

\vspace{-3ex}
\section{Evaluation}\label{sec:eval}
We evaluate schedule performance and algorithm runtime at increasing scales on different hardware 
and on different direct-connect optical topologies, some of which are well-studied (complete bipartite, hypercube, twisted hypercube, Torus) while others are non-standard such as punctured tori with non-homogenous degree per node.
We also present performance results at a large scale in simulation. 
We summarize our performance results next.

 {\bf Performance of link-based schedules}:  Our lowered tsMCF schedules deliver near-optimal throughput performance on different topologies and scales, outperforming state-of-the-art baselines by up to 1.6$\times$ (\S\ref{sec:opt-coll} Fig.~\ref{fig:linkbasedxml}).

{\bf Performance of path-based schedules}: 
Our lowered MCF-extP and pMCF schedules deliver near-optimal throughput performance on different standard and non-standard topologies and scales, outperforming state-of-the-art scalable baselines by up to $30\%$ (\S\ref{sec:opt-coll}, Fig.~\ref{fig:routebased} and Fig.~\ref{fig:torus_punctured}), and the \ata speedups directly translate to speedups in the 3D FFT workload (\S\ref{sec:opt-coll}, Fig.~\ref{fig:fft_time}). MCF outperforms other scalable baselines at large scale in simulation (\S\ref{sec:large-sim}, Fig.~\ref{fig:compare_algos}, Fig~\ref{fig:topo_link_disable}).

{\bf Algorithm runtime}. Through large-scale simulations going up to 1000 nodes, our MCF decomposition approach yields orders of magnitude improvement in algorithm runtime over the original MCF and all other baseline schedule generation approaches (\S\ref{sec:large-sim}, Fig~\ref{fig:comptime_genkautz}).

{\bf Topology}. We identify GenKautz as a family of expander topologies that have near-optimal \ata performance while also having complete coverage in $N$ and $d$, outperforming other well-known expander topologies (\S\ref{sec:lower_bound}, Fig.~\ref{fig:genkautz_vs_lb_d4}).

\begin{figure*}
\centering
\includegraphics[width=0.98\textwidth]{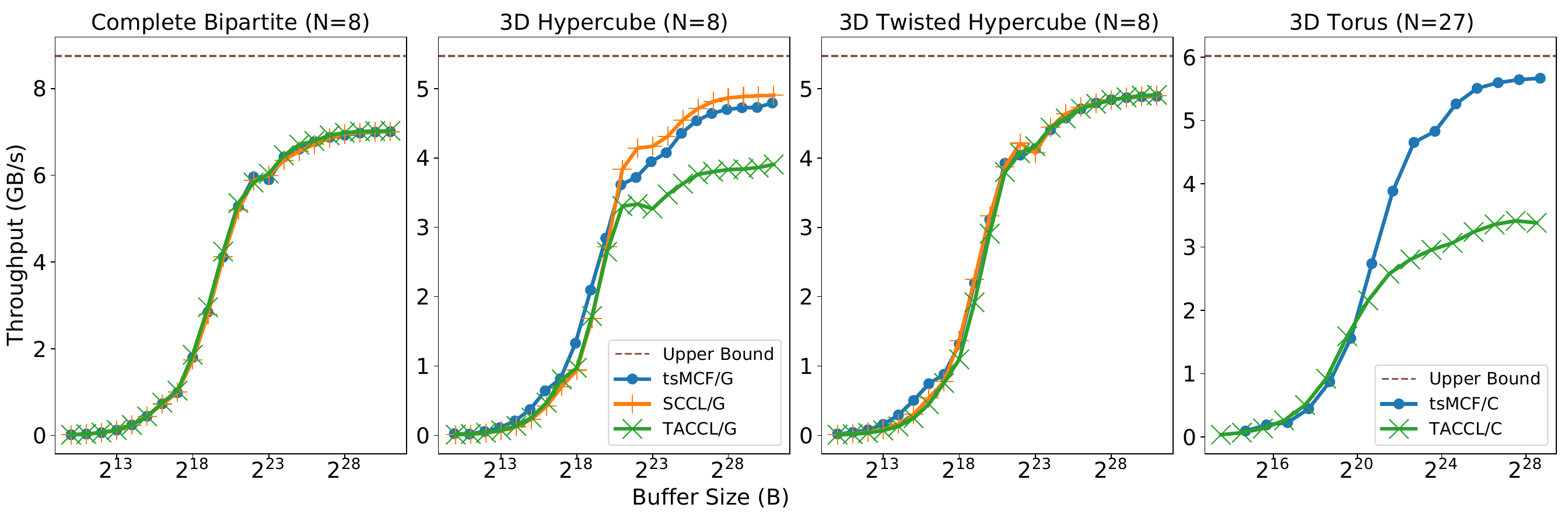}
\caption{Throughput of link-based \ata schedules on different topologies and runtimes. Appended \slash G indicates the schedule is lowered to GPUs and the MSCCL~\cite{msccl} runtime, whereas \slash C means the schedule is lowered to CPUs and the oneCCL~\cite{oneccl} runtime. Averaged over 20 iters.}
\label{fig:linkbasedxml}
\end{figure*}
\begin{figure*}
\centering
\includegraphics[width=0.98\textwidth]{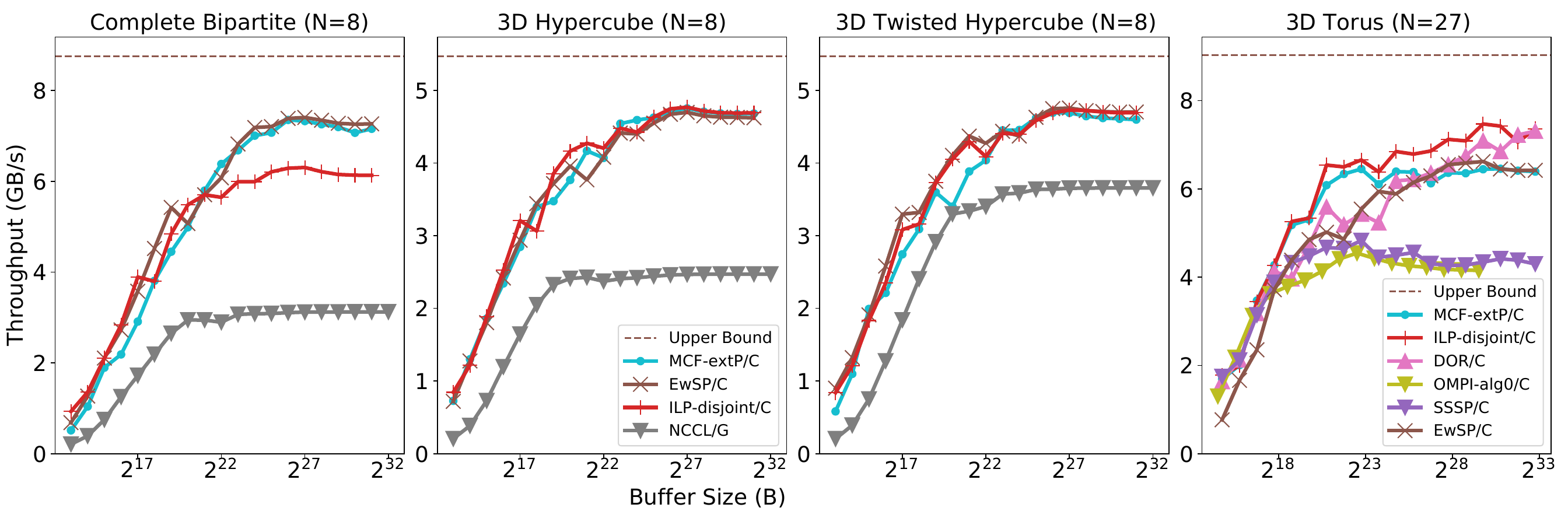}
\caption{Throughput of route-based \ata schedules on different topologies and runtimes. Appended \slash G indicates the schedule is lowered to GPUs and the NCCL~\cite{msccl} runtime, whereas \slash C means the schedule is lowered to CPUs and the Open MPI~\cite{ompi} runtime. Averaged over 20 iters.}
\label{fig:routebased}
\end{figure*}
\vspace{-2ex}
\subsection{Direct-connect testbed and cluster}\label{sec:testbeds}

We evaluate the \ata schedules on two testbeds: an internal 8 server (1 NVIDIA A100 GPU~\cite{gpu} per server) testbed that supports topology reconfiguration, and an external 27 server (1 CPU per server) cluster at the Texas Advanced Computing Center (TACC)~\cite{tacc, tacc-rockport} where topology is fixed to the Torus.

\noindent {\bf Cerio Card}. In both testbeds, each server is equipped with a Cerio NC1225 network card~\cite{rockportnic}.
The card supports source routing with multi-path and cut-through flow control, 
and stores up to 8 routes per destination.
It offers up to 300 Gbps of total forwarding bandwidth using 12x 25 Gbps links ($b=25$ Gbps or $3.125$ GB/s), and supports 100 Gbps of injection bandwidth from the host or GPU using x16 PCIe gen3. 
We can accordingly evaluate both link and path-based schedules on both testbeds.

\noindent {\bf Internal GPU Testbed}. 
The network cards are directly connected via a Telescent optical patch panel~\cite{patchpanel}.
Our testbed can realize different topologies by reconfiguring the patch panel. 
We limit our evaluation to bidirectional topologies, specifically hypercube and twisted hypercube both with degree 3 (i.e., $B$=75 Gbps), and complete bipartite with degree 4 ($B$=100 Gbps).
While unidirectional topologies (e.g., GenKautz) can be realized by configuring the patch panel in simplex mode, we cannot accurately evaluate their performance since the requisite overlay 
routing for the reverse path traffic (acks) is currently only supported using routing rules performed by the kernel leading to unpredictable RTTs. 

\noindent {\bf TACC HPC Cluster}. A cluster of CPU servers at TACC are connected in a fixed torus topology using the Cerio fabric~\cite{tacc, tacc-rockport}.
We use a 3x3x3 torus (27 nodes, degree=6) from the cluster to run our experiments.
The host injection bandwidth of 100 Gbps is less than $B$=150 Gbps for degree 6; hence, we evaluate the benefits of path based schedules to exploit the extra forwarding bandwidth for \ata\!.

\subsection{Evaluation of optimized collectives}\label{sec:opt-coll}
\textbf{Time-stepped tsMCF schedules:}
Fig.~\ref{fig:linkbasedxml} shows our lowered link-based tsMCF schedules deliver near-optimal performance at large message sizes on different topologies and scales.
State-of-the-art scheduling algorithms such as SCCL~\cite{SCCL} do not scale, failing to terminate even at a modest 27-node scale (Fig.~\ref{fig:linkbasedxml}, right).
TACCL~\cite{TACCL} schedules, on the other hand, underperform on the Hypercube by $22\%$ and on the 3D Torus by up to 1.6$\times$ at large buffer sizes (Fig.~\ref{fig:linkbasedxml}, right).
Since our GPU testbed is constrained to an 8-node scale, we resort to the CPU cluster to evaluate at larger 27-node scale.
In Fig.~\ref{fig:linkbasedxml}, we append the algorithm name with \slash C to indicate that we lowered to oneCCL and ran on CPUs, whereas \slash G refers to lowering to MSCCL and running on the A100 GPUs.
Recall the link-based schedule experiments in Fig.~\ref{fig:linkbasedxml} do not use the routing capability from the underlying fabric; all transfers are on point-to-point links controlled by the scheduling thread(s) on the host GPU or CPU. 

On the 3D Torus (Fig.~\ref{fig:linkbasedxml}, right), tsMCF uses the bottleneck model (\S\ref{subsec:non_source-routed-fabrics} Fig.~\ref{fig:with_fw_bw_bottlenecked}) to produce the schedule since the host injection bandwidth (100 Gbps) is lower than the NIC bandwidth (150 Gbps for degree 6).  
The flow value produced by MCF on this bottlenecked 3D Torus topology is $f=\frac{2}{27}$, which dictates the theoretical upper bound of $(N-1)fb=(26)(\frac{2}{27})(3.125)=6.01$ GB/s.
On the other hand, the flow value in the non-bottlenecked setting, as we discuss shortly in Fig.~\ref{fig:routebased} (right), is $\frac{1}{9}$ which is $57\%$ higher.
The theoretical {\em upper bound} on throughput is $(N-1)fb$, where $f$ is the optimal flow value given by MCF (each commodity gets a max flow value of $f$ assuming link capacity is 1, 
so when link capacity is $b$, the max achievable flow out of a node sourcing $N-1$ flows is $(N-1)fb$).

In summary, our experiments show that {\em the optimized tsMCF schedules are ideal for both GPU and HPC fabrics where additional forwarding bandwidth is not available while being generalizable to a wide range of topologies}.

\begin{figure*}[t]
    \begin{minipage}{0.58\textwidth}
     \centering
        \includegraphics[width=0.95\columnwidth]{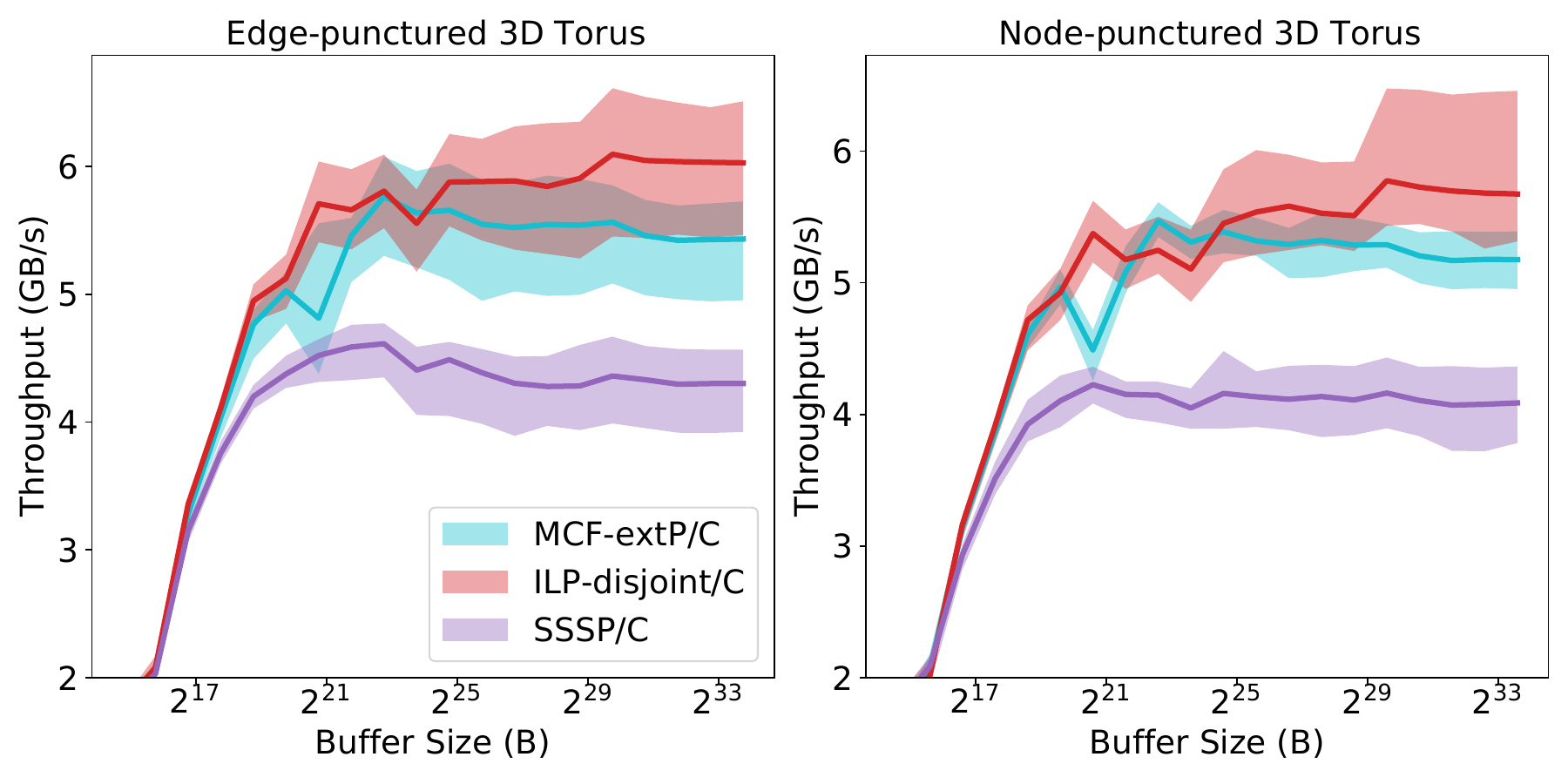}
	\caption{Performance on punctured 3D Torus, removing 3 links (left) or 3 edges (right) at random. Envelope min/max/average (line) over 10 instances (20 iters per)}
	\label{fig:torus_punctured}
    \end{minipage}
    \hfill
    \begin{minipage}{0.4\textwidth}
     \centering
        \includegraphics[width=0.98\columnwidth]{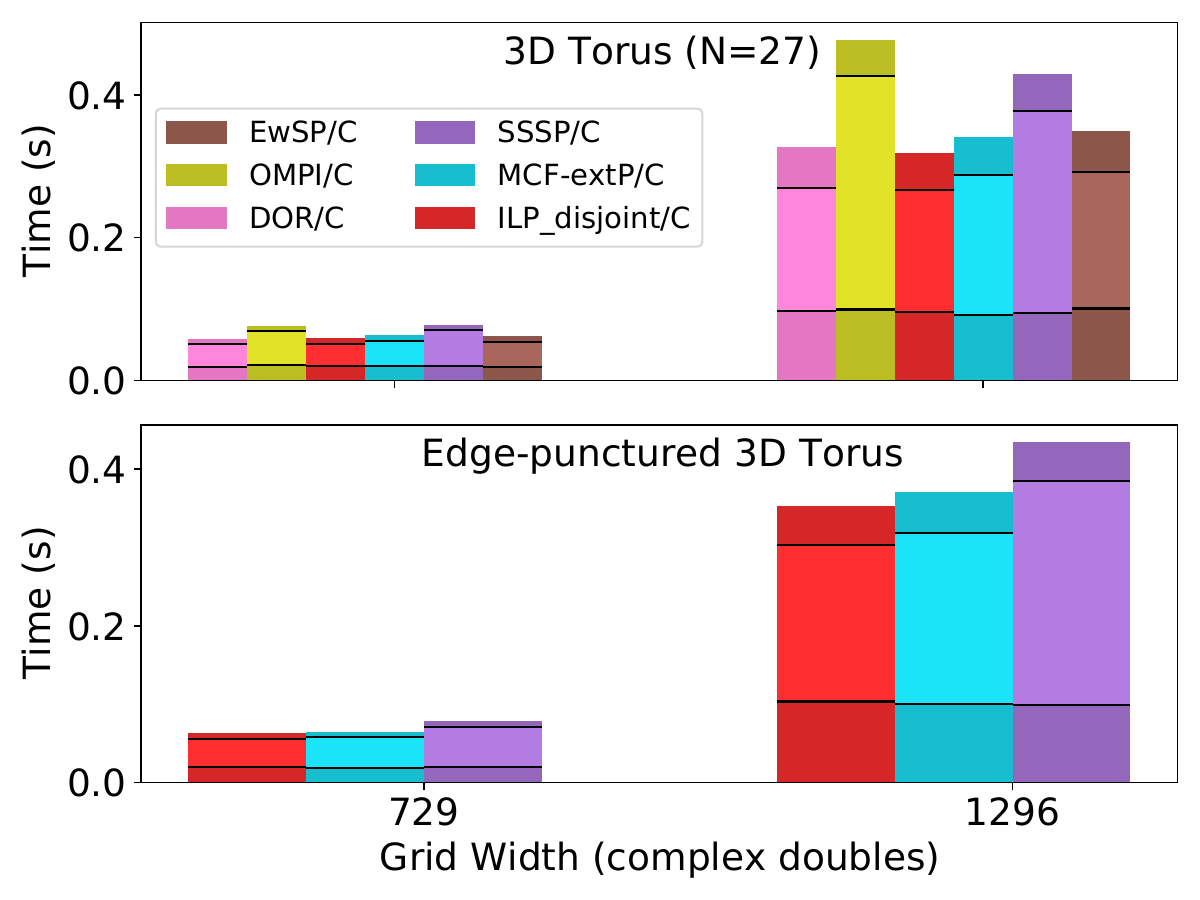}
	\caption{3D FFT Times ($N$=27 processes, 32 threads each)}
	\label{fig:fft_time}
    \end{minipage}
 \end{figure*}
\noindent
\textbf{Path-based schedules utilizing forwarding bandwidth}
We implement two link-load-minimizing single-path baselines. The first is based on Integer Linear Programming (ILP) which is tractable only at small scales.
It selects a subset of a candidate set of $(s,t)$ paths such that the maximum load on any link is minimized. Low maximum load leads to high \ata\! throughput. We experiment with both link-disjoint (ILP-disjoint) and shortest paths (ILP-shortest) in the candidate set.
The second baseline is Single Source Shortest Path (SSSP)~\cite{Domke11} heuristic that iteratively computes shortest paths through a graph whose link weights reflect the additional congestion caused due to each iteration.

Fig.~\ref{fig:routebased} shows our lowered MCF-extP schedules deliver near-optimal bandwidth performance in practice as we see at small scale (N=8) for the complete bipartite topology (left), and both the hypercube and twisted hypercube topologies (middle).  
On the complete bipartite, we see MCF-extP outperforms ILP-disjoint, which matches the theoretical results since the latter, which is constrained to a single path per commodity, is not bandwidth optimal on this topology. 
On the 27 node 3D Torus (right) on the supercomputer, our MCF-extP slightly underperforms due to practical limitations with injection rate control in the current fabric (more in \S\ref{sec:eval:discussion}).
Both dimension ordered routing (DOR)~\cite{dally1987deadlock} and ILP-disjoint are theoretically bandwidth optimal on the 3D Torus and are strong baselines. 
However, DOR does not work on non-Tori topologies, and ILP-disjoint does not produce optimal solutions in general and becomes intractable for larger topologies. 
SSSP is both scalable and general, but it produces sub-optimal solutions and is more than $50\%$ worse than MCF-extP on the 3D Torus at large buffers. 

MCF-extP also far outperforms NCCL and OMPI's native \ata algorithms up to $2.3\times$ on Bipartite, and $55\%$ on 3D Torus.
NCCL/OMPI native schedules perform N-1 point-to-point send/recv operations (flows) per rank.
These flows utilize the deadlock-free routes underneath which are computed by the Cerio fabric~\cite{rockport-deadlock}. 
We also implement and evaluate the Equal weight Shortest Path (EwSP) baseline that distributes each commodity equally on all the shortest paths between source and destination. 
While EwSP performs very well on all the four topologies in Fig.~\ref{fig:routebased}, this is not the case in general, as we show later in \S\ref{sec:large-sim}, Fig.~\ref{fig:compare_algos}.
 
Comparing path-based schedules of Fig.~\ref{fig:routebased} to link-based schedules of Fig.~\ref{fig:linkbasedxml} on the same topologies, we see that the bandwidth performance of MCF-extP is comparable to that of tsMCF at large buffer sizes for the Bipartite (degree=4), HyperCube and Twisted Hypercube (degree=3).
This is expected on these low-degree topologies since there is no additional forwarding bandwidth that path-based MCF can exploit, and both approaches have the same theoretical upper bound.
On the other hand, MCF-extP significantly outperforms tsMCF on the larger 27 node 3D Torus (by about $3.4\times$ at 1MB, and 15\% at larger buffers) since it is able to exploit the additional forwarding bandwidth in this case (degree=6, $B=$150 Gbps$\ge$ 100 Gbps injection bandwidth)--here MCF-extP is unable to reach its theoretical expected performance due to practical limitations on the current fabric (more in \S\ref{sec:eval:discussion}).
In general, we also see MCF-extP has a significant performance advantage at smaller buffers due to the superior latency performance of cut-through routing as compared to having to incur a global synchronization per timestep with tsMCF.

\noindent \textbf{Performance on non-standard topologies} We assess the topology-agnostic quality of MCF-extP on heterogeneous degree topologies produced by sampling sub-graphs from the 3D Torus. 
Specifically, we sample 10 different instances of the 3D Torus to create {\em edge-punctured} (3 random edges removed) and {\em node-punctured} (3 random nodes removed) Tori.
Baselines such as DOR are not defined for such punctured Tori. 
Fig.~\ref{fig:torus_punctured} shows the throughput of MCF-extP compared to ILP-disjoint and SSSP.
The results are consistent with those of Fig.~\ref{fig:routebased}, where MCF-extP significantly outperforms SSSP but underperforms ILP-disjoint due to practical limitations with injection rate control we discuss shortly in \S\ref{sec:eval:discussion}.
The results also match the empirical results, where we observe similar maximum link load for both ILP-disjoint and MCF (which is $\sim 30\%$ lower than SSSP).
Puncturing the Torus is a way to emulate failures of links and/or nodes, which would be expected in a cloud setting, and to show the superior performance of MCF in such scenarios.
When combined with the superior algorithm runtime (Fig.~\ref{fig:comptime_genkautz}), this shows that {\em our approach is able to more quickly react to failures in networks of hundreds of nodes without compromising on performance}.

\noindent \textbf{Workload speedups} We implement distributed 3D Fast Fourier Transform (FFT), and run it on the 27 node 3D Torus, and on the edge-punctured 3D Torus on up to $1296^3$ grid size, corresponding to \ata buffer size up to 1.29 GB.
Fig.~\ref{fig:fft_time} shows the speedups in the 3D FFT on the 3D Torus (top, corresponding to \ata speedups from Fig.~\ref{fig:routebased}, right) and the punctured 3D Torus (bottom, corresponding to \ata speedups from Fig.~\ref{fig:torus_punctured}).
We observe up to $20\%$ ($14.9\%$) total speedup in the FFT time using our MCF schedules compared  to SSSP schedules on the 3D Torus (edge-punctured 3D Torus).
The speedups are a direct result of the faster \ata MCF schedules, consistent with Fig.~\ref{fig:routebased} and Fig.~\ref{fig:torus_punctured}.
We use the latest version of the FFTW library~\cite{fftw} with multi-threading and with OMPI running our \ata schedules. 
We use slab decomposition, where each process (1 multi-threaded process per node) performs three steps: first computes 2D FFTs on its slabs and packs the data, then runs \ata with all other processes, and finally unpacks and computes 1D FFTs to complete the 3D FFT.
These three steps involved in the FFT computation are shown with bands in the bars of Fig.~\ref{fig:fft_time} with the first step corresponding to the bottom band.
\subsection{Large scale numerical simulations}\label{sec:large-sim}
\begin{figure}[b]
     \centering
        \includegraphics[width=0.9\columnwidth]{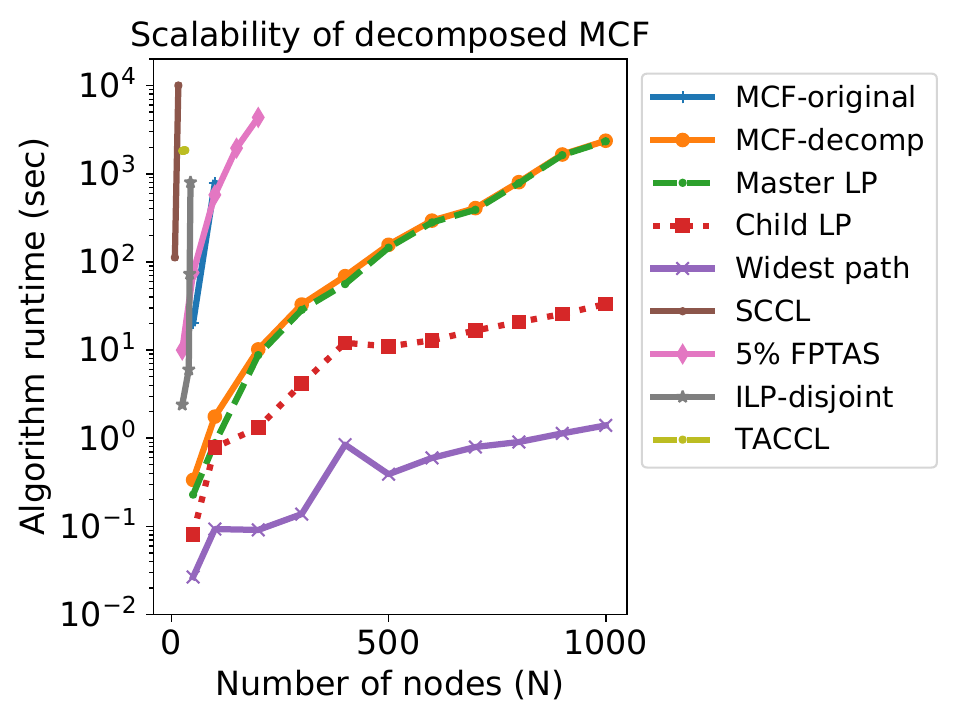}
	\caption{Scaling on GenKautz (degree=4); MCF-decomp is master LP + $N$ parallel child LPs + widest path extraction heuristic}
	\label{fig:comptime_genkautz}
\end{figure}
\textbf{Benefits of MCF decomposition:} 
Fig.~\ref{fig:comptime_genkautz} shows that our MCF decomposition approach (MCF-decomp) yields orders of magnitude improvement in algorithm runtime over the original MCF (MCF-original), and the other highly unscalable topology-agnostic schedule generation approaches such as SCCL~\cite{SCCL}, TACCL~\cite{TACCL}, Karakostas' Fully Polynomial Time Approximation Scheme (FPTAS)~\cite{karakostas2008faster}, and ILP-disjoint.
We show the computation time required for MCF-original (link-based) on $N^3$ variables and MCF-decomp on $N^2$ variables for the Generalized Kautz graph~\cite{genkautz} (also see \S\ref{sec:lower_bound}) 
for various choices of $N$. We observe that even at the scales of $N=50,100$, MCF-decomp is two orders of magnitude faster, while MCF-original fails to produce a solution for $N>100$ nodes in a reasonable time.
In contrast, SCCL~\cite{SCCL} is unable to generate \ata schedules for $N=16$ even in $10^4$ seconds. 
TACCL~\cite{TACCL}, a heuristic designed to be more scalable than SCCL, takes over 30 minutes to generate \ata schedules for even 32-node networks. 
Furthermore, for slightly larger networks, it is unable to produce even an approximate solution in 30 minutes. 
Even ILP-disjoint is only able to generate optimal schedules up to $N=44$.
While we evaluate on the GenKautz topology in Fig.~\ref{fig:comptime_genkautz}, the scaling trends apply to other topologies (expanders, tori) as well.

The algorithm runtime of MCF-decomp follows a polynomial trend, and it is dominated by the time taken to solve the ``Master LP", which is a source-based MCF leading to a solution in 40 minutes for $N=1000$ provided all ``Child LPs'' and subsequent ``Widest path" extraction functions are run in parallel on $N$ cores. The exact degree of the polynomial governing the speedup factor over original MCF is determined by that of the polynomial that determines the time complexity of solving such network flow LP problems. For example, if a state-of-the-art LP solver takes $\mathcal{O}(N^{2.37})$ time to solve $N$-variable problems, the speedup factor can be estimated to be $\mathcal{O}(N^{3\times2.37}/N^{2\times2.37}) = \mathcal{O}(N^{2.37})$.
 
Fig.~\ref{fig:comptime_genkautz} also shows the runtime performance of a state-of-the-art FPTAS by Karakostas~\cite{karakostas2008faster} at $\epsilon = 0.05$. While it shows a polynomial scaling trend unlike the other baseline schemes mentioned earlier, it significantly underperforms the MCF (decomposed) algorithm in running time even after sacrificing optimality, thus making it impractical for networks larger than a couple of hundred nodes. The FPTAS, being an inherently sequential algorithm, cannot exploit the high degree of parallelism that is exploited by decomposed MCF.
\begin{figure}[htbp]
\centering
	\includegraphics[width=0.9\columnwidth]{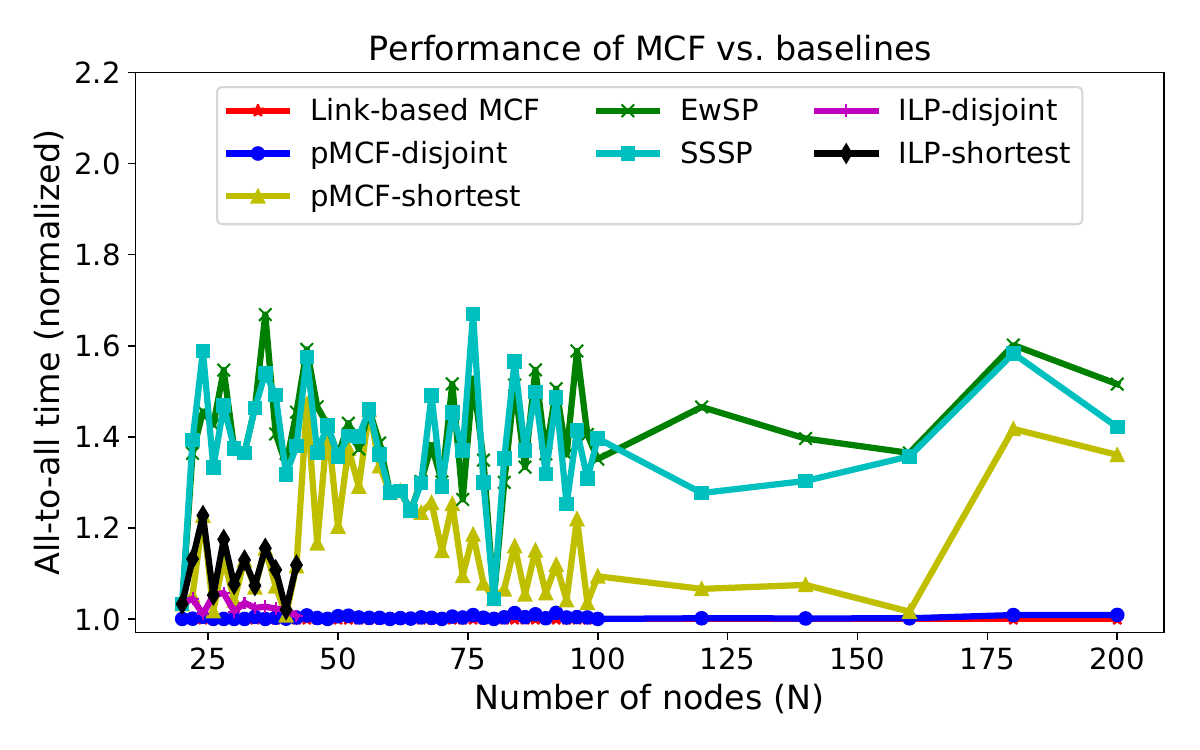}	
	\caption{Performance on degree 4 Generalized Kautz graphs normalized by Link-based MCF}
	\label{fig:compare_algos}
\end{figure}

\noindent \textbf{Performance of schedules that utilize NIC forwarding bandwidth:}
Fig. \ref{fig:compare_algos} (and Fig.~\ref{fig:topo_link_disable}) shows the {\em \ata time} (the time to concurrently transmit the workload assuming unit commodity demand and link capacities; it is equal to $1/MCF$ and the maximum link load) of various path-based schemes normalized by optimal link-based MCF.
Comparing single path schemes, mainly ILP and SSSP, although SSSP is fast, it is up to $1.6\times$ worse than the theoretically optimal MCF solution.
On the other hand, the ILP schemes, although performant, do not scale well (Fig.~\ref{fig:comptime_genkautz}).
\begin{figure*}[t!]
     \begin{minipage}{0.32\textwidth}
        \centering
	\includegraphics[width=1\columnwidth]{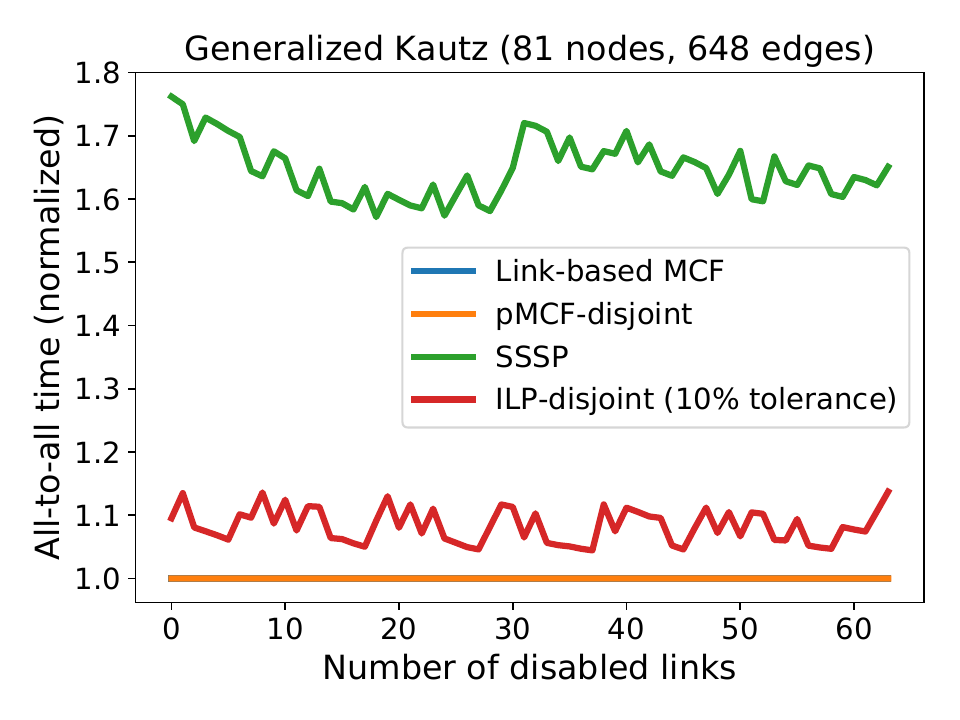}
	\caption{Performance on $N$=81 Gen Kautz w/ links disabled normalized by Link-based MCF} %
	\label{fig:topo_link_disable}
    \end{minipage}
    \hfill
     \begin{minipage}{0.65\textwidth}
        \centering
	\includegraphics[width=0.495\columnwidth]{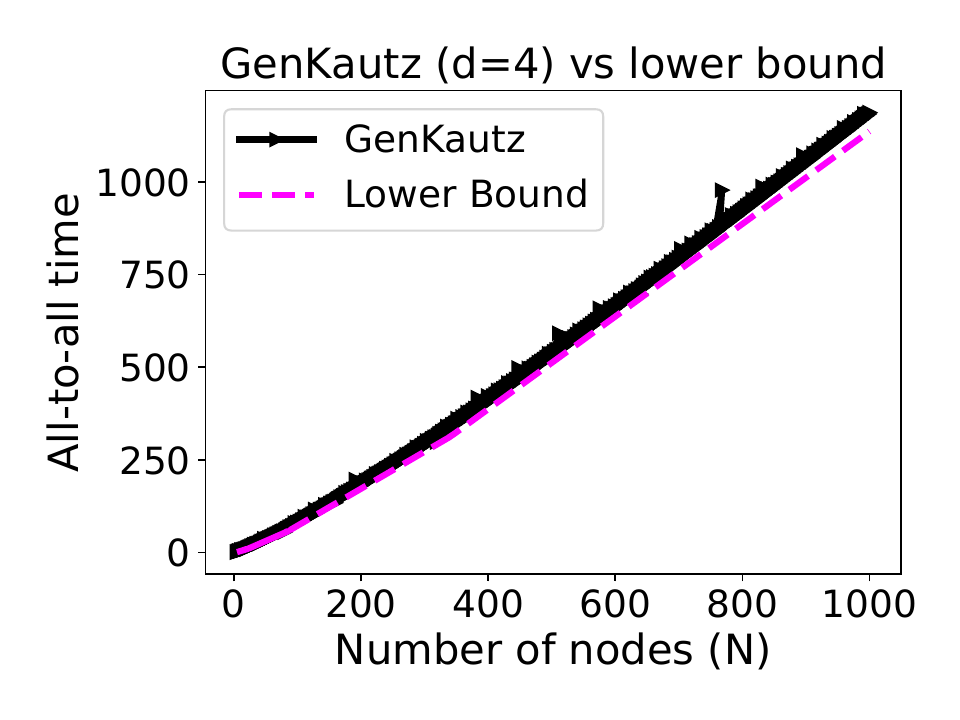}
	\includegraphics[width=0.495\columnwidth]{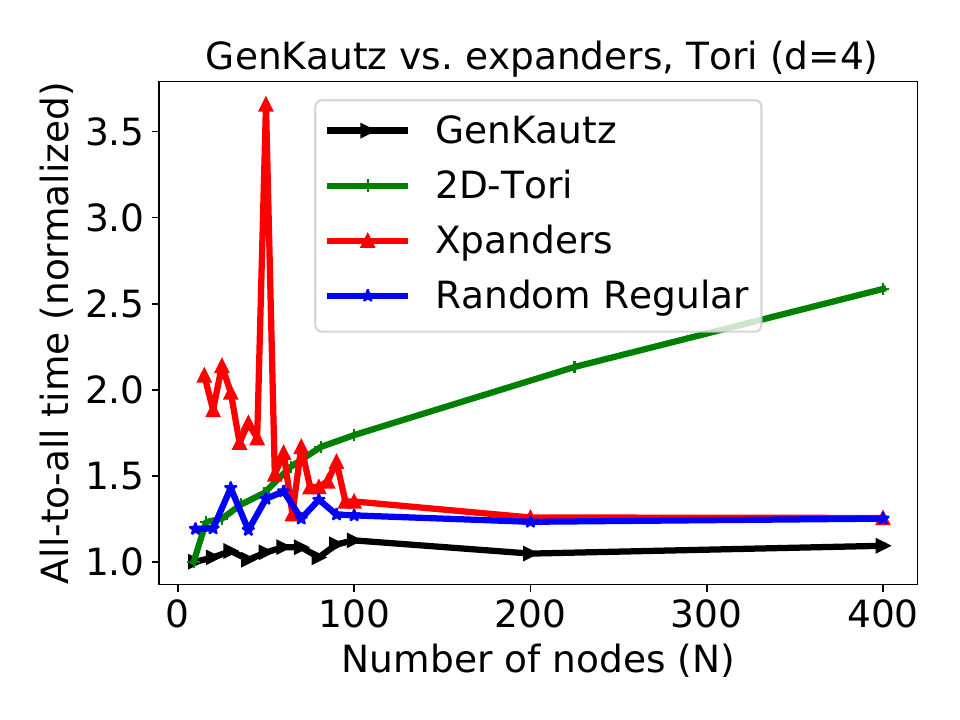}
	\caption{GenKautz topology performance relative to lower bound (left), and expanders and 2D-Tori (right, normalized by lower bound)}
	\label{fig:genkautz_vs_lb_d4}
    \end{minipage}
\end{figure*}

Although multipath schemes exploit the path diversity of the network well, naive multipath approaches such as Equal Weight Shortest Path that distribute the workload evenly along all available $(s,t)$ shortest paths do not perform well. Its performance is similar to that of SSSP. However, \emph{weighted} multipath schemes such as pMCF (Path-based MCF (disjoint)) and MCF-extP (Link-based MCF with path extraction) can achieve optimal or near-optimal performance. 
pMCF is optimal in theory if the number of $(s,t)$ paths in the initial set is not restricted. However, this set can be extremely large, thus necessitating the development of decomposed MCF-extP, whose scalability has been illustrated in Fig. \ref{fig:comptime_genkautz}. However, in practice, if the $(s,t)$ path set is restricted to link-disjoint paths, pMCF can almost match the optimal performance of pure link-based MCF. Interestingly, pMCF run with all shortest $(s,t)$ paths exhibits suboptimal  performance, especially for expander graphs (Fig. \ref{fig:compare_algos}) since the latter do not have too many shortest paths. While pMCF (with shortest paths) performs well on topologies such as tori, the starting path set is often exponentially large in size, thus making the scheme impractical. In contrast, link-disjoint $(s,t)$ paths can be computed in polynomial time for arbitrary topologies. Since there are at most $d$ disjoint paths for any $(s,t)$ pair, the number of LP variables is $\mathcal{O}(N^2)$, making it comparable to decomposed Link-based MCF in terms of time complexity.

Fig. \ref{fig:topo_link_disable} illustrates the generality and good performance of MCF schemes (when compared to baselines like SSSP) since they perform optimally/near-optimally on heterogeneous degree-irregular subgraphs, e.g., formed by disabling random links. At the $N=81$ scale considered ($d=8$), interestingly, ILP-disjoint shows performance close to link-based MCF's if we allow it a tolerance factor ($\epsilon=0.1$); but it does not scale to large $N$ owing to it being NP-hard. 
\subsection{Near throughput-optimal \ata\! topologies}
\label{sec:lower_bound}
We ask \textit{which topologies with $N$ nodes and degree $d$ yield the best \ata\! performance?}
This is an important design question for supercomputing clusters and is becoming more important with increased deployments of reconfigurable fabrics~\cite{wang2023topoopt, jouppi2023tpu, lightwave-sigcomm}.
We derive an upper bound for \ata\! throughput (equivalently, lower bound the \ata\! collective time) for arbitrary $d$-regular graphs with $N$ nodes, and then highlight an expander graph that achieves performance close to this bound.
Since the bound is reasonably tight, it allows us to compare the performance of various topologies with respect to the theoretical optimal.

\begin{theorem}[Lower bound on \ata\! time.]\label{thm:lb}
The time taken to accomplish all-to-all communication in a $d$-regular graph $G$ on $N$ nodes scales as $\Omega(N\log_d N)$.
\end{theorem}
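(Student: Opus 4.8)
The plan is to combine a volume (``byte-hop'') counting argument with a Moore-type bound on pairwise distances. The central observation is that completing the \ata collective requires moving, for every ordered pair $(s,t)$, a shard of $m$ bytes from $s$ to $t$, and any routing of that shard must traverse at least $\mathrm{dist}(s,t)$ links. Hence the total work---measured in byte-hops---that the fabric must perform is at least $m\sum_{s\neq t}\mathrm{dist}(s,t)$, independent of the schedule, the flow-control discipline, or any intermediate buffering. This is precisely the quantity the bandwidth tax charges for, and it supplies the logarithmic factor that a naive per-node ingress bound (which yields only $\Omega(N/d)$) misses.

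First I would lower bound the aggregate distance $\sum_{s\neq t}\mathrm{dist}(s,t)$ using the degree constraint. Fix a source $s$; since every node has degree $d$, the number of nodes within distance $r$ of $s$ is at most $1+d+\cdots+d^r\le 2d^r$. Choosing $r_0=\lfloor\log_d(N/4)\rfloor$ forces at most $N/2$ nodes to lie within distance $r_0$, so at least $N/2$ destinations are at distance $r_0=\Omega(\log_d N)$ from $s$. Summing the resulting per-source bound $\sum_{t}\mathrm{dist}(s,t)=\Omega(N\log_d N)$ over all $N$ choices of $s$ gives $\sum_{s\neq t}\mathrm{dist}(s,t)=\Omega(N^2\log_d N)$.

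Next I would bound the rate at which work can be completed. The graph has $Nd$ directed links, each of capacity $b$, so in time $T$ the fabric can move at most $Nd\,b\,T$ byte-hops. Equating this supply with the demand from the first paragraph gives $Nd\,b\,T\ge m\sum_{s\neq t}\mathrm{dist}(s,t)=\Omega\!\left(mN^2\log_d N\right)$, hence $T=\Omega\!\left(\frac{m\,N\log_d N}{b\,d}\right)$. Treating the shard size $m$, the link bandwidth $b$, and the low, $N$-independent degree $d$ as constants recovers the stated bound $T=\Omega(N\log_d N)$. Equivalently, in the MCF language of \S\ref{subsec:LP-basiclink}, the same counting caps the maximum concurrent flow at $F=O\!\left(d/(N\log_d N)\right)$, and the \ata time is $1/F$.

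The step I expect to be the main obstacle is making the byte-hop accounting airtight as a lower bound over all admissible schedules: I must argue that no combination of multipath routing, store-and-forward buffering, or time-stepping can push the total link traffic below $m\sum_{s\neq t}\mathrm{dist}(s,t)$---which follows because every unit of commodity $(s,t)$ crosses at least $\mathrm{dist}(s,t)$ edges by definition of graph distance---and that the aggregate capacity ceiling $Nd\,b\,T$ cannot be beaten by any scheduler. The Moore-bound estimate is routine; the care lies in this clean conservation argument and in stating the normalization (constant $d$, $b$, $m$) under which the $1/d$ prefactor is absorbed into the $\Omega$.
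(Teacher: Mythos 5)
Your proposal is correct and takes essentially the same approach as the paper: both arguments count the total flow-hops (byte-hops) that the all-to-all demand must consume against the aggregate link capacity $dN$, with the logarithmic factor coming from the fact that degree-$d$ balls grow at most like $d^r$, so the per-source distance sum is $\Omega(N\log_d N)$. The only cosmetic difference is that the paper establishes the distance sum by computing it on a full $d$-ary arborescence (the best case), whereas you lower-bound it for an arbitrary $d$-regular graph via the Moore bound---a slightly more rigorous rendering of the same step, with the same absorption of the $1/d$ prefactor under the constant-degree normalization.
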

\begin{proof}[Proof sketch]
First, consider a single source node $r$ and an arborescence (outgoing rooted directed tree) of $G$ (denoted by $T_{d,N}$), which has $N$ nodes and maximum out-degree $d$. It has $d^k$ number of nodes at all levels $k$ except when $k$ is equal to its height. If $r$ needs to send $N-1$ flows of value $f$ to each of the other $N-1$ nodes along $T_{d,N}$, the minimum capacity needed is $f\times\sum_{u\in V_{T_{d,N}}}D(r,u)$, where $D(s,t)$ is the distance (hop count) between $s$ and $t$ along $T_{d,N}$. Thus, the minimum capacity needed for sending commodities from all nodes (along $N$ respective rooted arborescences) is $N\times f\times\sum_{u\in V_{T_{d,N}}}D(r,u)$. Assuming that a $d$-regular di-graph has $d$ outgoing unit capacity links per node, the total capacity available in the network is $d\times N$. It follows that $f\times\sum_{u\in V_{T_{d,N}}}D(r,u) \leq d$. The \ata\! workload completion time/latency is given by:
\begin{align}\label{eq:lb}
1/f \geq \sum_{u\in V_{T_{d,N}}} D(r,u) / d. 
\end{align}
Also, in a $d$-regular graph where each arborescence is a full $k$ layer tree, $N=\sum_{i=0}^{k-1}d^i=\frac{d^k-1}{d-1}$ and
$\sum_{u\in V_{T_{d,N}}}D(r,u)=\sum_{i=0}^{k-1}i\times d^i =\frac{d^{k+1}(k-1)-d^kk+d}{(d-1)^2} = \Theta(k\:d^{k-1})$. The RHS of Eq. \eqref{eq:lb} then becomes $\Theta(k\:d^{k-2}) = \Theta(N\log_d N)$. This establishes the scaling law for the lower bound on \ata\! time in any $d$-regular $N$-node topology.
\end{proof}

\noindent\textbf{Graphs achieving \ata\! time bound} Generalized Kautz graphs~\cite{genkautz} constitute a family of \emph{expander graphs} that comes close to achieving the bound in Theorem \ref{thm:lb}. A benefit of these graphs is that we can generate an instance for \emph{any} value of $N$ and $d$. Such coverage in $N$ and $d$ is not possible for most graph families popular in the HPC community, such as mesh, tori, SlimFly~\cite{besta2014slim}, SpectralFly~\cite{young2022spectralfly}, etc. 
Fig.~\ref{fig:genkautz_vs_lb_d4}(left) shows the simulated \ata\! performance of the GenKautz graph for $d=4$ with respect to the derived lower bound.
We observe that these graphs get very close to the \ata\! lower bound for any $d$-regular graph (with the ratio between the two approaching $1$ for large $N$) thus making them optimal expanders for \ata\! collectives.

Fig.~\ref{fig:genkautz_vs_lb_d4} (right) compares the performance of GenKautz with non-expanders (e.g., 2D-tori) and other well-known expander graphs, e.g., Xpander~\cite{valadarsky2016xpander} and random regular graphs /Jellyfish~\cite{singla2012jellyfish}. The last two are also known to exhibit good coverage in $N$ and $d$. While the expanders significantly outperform non-expanders (for $d=4$, GenKautz has about $2.4\times$ lower latency than 2D-tori for large $N$), GenKautz has the best performance among the expanders (e.g., it is 10\% better than Xpander and random regular graphs). Similar trends are observed for higher degrees.

\vspace{-2ex}
\subsection{Conclusion and Discussion}\label{sec:eval:discussion}
We develop efficient schedules and topologies for \ata collective communications geared for large-scale direct-connect fabrics.
Our results demonstrate that the time-stepped MCF approach and compiler are highly scalable and achieve near-optimal bandwidth performance in practice. 
This is valuable for many applications that rely on \ata when run on GPU (or CPU) clusters, such as deep learning training and inference.
For path-based MCF-extP and pMCF, our results similarly demonstrate excellent performance in practice at the smaller 8 and 27-node scales.
Path-based MCF is able to exploit additional forwarding bandwidth to speed up the collective. 
Our experiments with MCF, however, uncovered additional theoretical and practical challenges, which we are addressing as part of our ongoing (and future) work.\\
{\bf Deadlock-free routing}:  When lowering path-based MCF routes to fabrics with wormhole (flit-based) routing such as Cerio, routes must be deadlock-free~\cite{dally1987deadlock}.
We implemented several variants of common algorithms for breaking deadlocks, such as DF-SSSP~\cite{Domke11} and LASH~\cite{skeie2002layered}, which assign virtual channels to routes after the routes are computed. 
We found that a variant of LASH~\cite{skeie2002layered}, which we call LASH-sequential performed best in terms of requiring the least number of layers; specifically, it required no more than 4 layers across all the algorithms (MCF, ILP, EwSP, etc.) and topologies we evaluated. %
Minimizing the number of VCs to make a given set of routes deadlock free is NP-hard~\cite{Domke11}. 
An open question is {\em how to generate \ata schedules that optimize throughput while ensuring deadlock freedom}. 

\noindent{\bf Injection rate control}: On the practical front, a limitation of path-based MCF solutions when lowered to existing fabrics is support for {\em injection rate control}.
As described in \S\ref{sec:compiler}, we implemented an approximation of injection rate control by splitting shards into granular equal-sized chunks/flows and steering them on routes.
While this approach worked very well at an 8-node scale as a proof of concept, and achieved reasonable performance at a larger 27-node scale, it is in general not scalable.
Granular chunking significantly increases the total number of active Queue Pairs (QPs) in the network. %
And our \ata micro-benchmarking experiments clearly showed a reduction in the achievable per-flow bandwidth as the number of QPs increased on the Cerio fabric, likely due to increased contention. %
We are pursuing two approaches to address this challenge: (1) introduce time steps into the routed MCF schedules and partition the flows across multiple timesteps, %
and (2) work with the Cerio vendor on options to expose injection rate control in the hardware. 

\noindent{\bf Clustered/Hybrid Configurations}: We are extending the general MCF formulation and the implementation to handle hybrid clustered settings with possibly severe imbalance between internal link bandwidth within a server, and external bandwidth (e.g., several Tbps internal bandwidth vs several Gbps external bandwidth in GPU servers from NVIDIA and AMD), and with possibly internal switching.

\noindent{\bf Future work}: Our solution computes a static schedule that assumes dedicated underlying link bandwidth for the duration of the \ata collective.
This is a reasonable assumption in several direct-connect settings where a cluster manager allocates circuits to jobs on a non-interfering basis.
Handling more dynamic environments with multiple jobs contending for bandwidth is left for future work.

\vspace{-3ex}
\section{Acknowledgement}
We thank the Rockport Networks team, Matthew Williams, Doug Taylor, Nick Tkotz, and Shaun Hennesey for their help and insights with the Rockport fabric experiments, and with making their slice of the TACC supercomputer available to us.
We also thank Amit Ruhela for his help and insights.\\
\noindent This research was developed with funding from the Defense Advanced Research Projects Agency (DARPA) under contract number HR001120C0089. 
The views, opinions and/or findings expressed are those of the authors and should not be interpreted as representing the official views or policies of the Department of Defense or the U.S. Government.

\bibliographystyle{ACM-Reference-Format}
\bibliography{refs}


\begin{thebibliography}{61}


\ifx \showCODEN    \undefined \def \showCODEN     #1{\unskip}     \fi
\ifx \showDOI      \undefined \def \showDOI       #1{#1}\fi
\ifx \showISBNx    \undefined \def \showISBNx     #1{\unskip}     \fi
\ifx \showISBNxiii \undefined \def \showISBNxiii  #1{\unskip}     \fi
\ifx \showISSN     \undefined \def \showISSN      #1{\unskip}     \fi
\ifx \showLCCN     \undefined \def \showLCCN      #1{\unskip}     \fi
\ifx \shownote     \undefined \def \shownote      #1{#1}          \fi
\ifx \showarticletitle \undefined \def \showarticletitle #1{#1}   \fi
\ifx \showURL      \undefined \def \showURL       {\relax}        \fi
\providecommand\bibfield[2]{#2}
\providecommand\bibinfo[2]{#2}
\providecommand\natexlab[1]{#1}
\providecommand\showeprint[2][]{arXiv:#2}

\bibitem[gpu(2021)]%
        {gpu}
 \bibinfo{year}{2021}\natexlab{}.
\newblock \bibinfo{title}{{NVIDIA A100 Tensor Core GPU}}.
\newblock
\newblock
\newblock
\shownote{\url{https://www.nvidia.com/en-us/data-center/a100/}}.


\bibitem[tac(2021)]%
        {tacc-rockport}
 \bibinfo{year}{2021}\natexlab{}.
\newblock \bibinfo{title}{{TACC establishes new Center of Excellence with
  Rockport Networks}}.
\newblock
\newblock
\newblock
\shownote{\url{https://rockportnetworks.com/tacc-establishes-new-center-of-excellence-with-rockport-networks/}}.


\bibitem[roc(2023)]%
        {rockportnic}
 \bibinfo{year}{2023}\natexlab{}.
\newblock \bibinfo{title}{Cerio}.
\newblock
\newblock
\urldef\tempurl%
\url{https://www.cerio.io/}
\showURL{%
\tempurl}


\bibitem[gro(2023)]%
        {gromacs}
 \bibinfo{year}{2023}\natexlab{}.
\newblock \bibinfo{title}{Parallelization of Particle-mesh Ewald (PME) in
  {GROMACS} for Molecular Dynamics}.
\newblock
\newblock
\newblock
\shownote{Available at
  \url{https://manual.gromacs.org/documentation/current/user-guide/mdrun-performance.html},
  accessed on 7.15.2023}.


\bibitem[tac(2023)]%
        {tacc}
 \bibinfo{year}{2023}\natexlab{}.
\newblock \bibinfo{title}{{Texas Advanced Computing Center}}.
\newblock
\newblock
\newblock
\shownote{\url{https://www.tacc.utexas.edu/}}.


\bibitem[fft(2024)]%
        {fftw}
 \bibinfo{year}{2024}\natexlab{}.
\newblock \bibinfo{title}{{Fastest Fourier Transform in the West}}.
\newblock
\newblock
\newblock
\shownote{\url{https://fftw.org/}}.


\bibitem[{AMD}(2023)]%
        {rccl}
\bibfield{author}{\bibinfo{person}{{AMD}}.} \bibinfo{year}{2023}\natexlab{}.
\newblock \bibinfo{title}{{ROC}m Communication Collectives Library}.
\newblock
\newblock
\newblock
\shownote{Available at \url{https://github.com/ROCmSoftwarePlatform/rccl},
  accessed on 7.15.2023}.


\bibitem[ApS(2023)]%
        {mosek}
\bibfield{author}{\bibinfo{person}{MOSEK ApS}.}
  \bibinfo{year}{2023}\natexlab{}.
\newblock \bibinfo{booktitle}{\emph{MOSEK Solver version 10.1}}.
\newblock
\urldef\tempurl%
\url{https://www.mosek.com/}
\showURL{%
\tempurl}


\bibitem[Arzani et~al\mbox{.}(2023)]%
        {arzani2023rethinking}
\bibfield{author}{\bibinfo{person}{Behnaz Arzani}, \bibinfo{person}{Siva
  Kesava~Reddy Kakarla}, \bibinfo{person}{Miguel Castro},
  \bibinfo{person}{Srikanth Kandula}, \bibinfo{person}{Saeed Maleki}, {and}
  \bibinfo{person}{Luke Marshall}.} \bibinfo{year}{2023}\natexlab{}.
\newblock \bibinfo{title}{Rethinking Machine Learning Collective Communication
  as a Multi-Commodity Flow Problem}.
\newblock
\newblock
\showeprint[arxiv]{2305.13479}~[cs.NI]


\bibitem[Ayala et~al\mbox{.}(2019)]%
        {a2a-comm-bottleneck}
\bibfield{author}{\bibinfo{person}{Alan Ayala}, \bibinfo{person}{Stanimire
  Tomov}, \bibinfo{person}{Xi Luo}, \bibinfo{person}{Hejer Shaeik},
  \bibinfo{person}{Azzam Haidar}, \bibinfo{person}{George Bosilca}, {and}
  \bibinfo{person}{Jack Dongarra}.} \bibinfo{year}{2019}\natexlab{}.
\newblock \showarticletitle{Impacts of multi-gpu mpi collective communications
  on large fft computation}. In \bibinfo{booktitle}{\emph{2019 IEEE/ACM
  Workshop on Exascale MPI (ExaMPI)}}. IEEE, \bibinfo{pages}{12--18}.
\newblock


\bibitem[Basu et~al\mbox{.}({[n.\,d.]})]%
        {temporalG}
\bibfield{author}{\bibinfo{person}{Prithwish Basu}, \bibinfo{person}{Feng Yu},
  \bibinfo{person}{Amotz Bar-Noy}, {and} \bibinfo{person}{Dror Rawitz}.}
  \bibinfo{year}{[n.\,d.]}\natexlab{}.
\newblock \bibinfo{booktitle}{\emph{To Sample or To Smash? Estimating
  reachability in large time-varying graphs}}.
\newblock \bibinfo{pages}{983--991}.
\newblock
\urldef\tempurl%
\url{https://doi.org/10.1137/1.9781611973440.112}
\showDOI{\tempurl}
\showeprint{https://epubs.siam.org/doi/pdf/10.1137/1.9781611973440.112}


\bibitem[Berendsen et~al\mbox{.}(1995)]%
        {berendsen1995gromacs}
\bibfield{author}{\bibinfo{person}{Herman~JC Berendsen}, \bibinfo{person}{David
  van~der Spoel}, {and} \bibinfo{person}{Rudi van Drunen}.}
  \bibinfo{year}{1995}\natexlab{}.
\newblock \showarticletitle{GROMACS: A message-passing parallel molecular
  dynamics implementation}.
\newblock \bibinfo{journal}{\emph{Computer physics communications}}
  \bibinfo{volume}{91}, \bibinfo{number}{1-3} (\bibinfo{year}{1995}),
  \bibinfo{pages}{43--56}.
\newblock


\bibitem[Besta and Hoefler(2014)]%
        {besta2014slim}
\bibfield{author}{\bibinfo{person}{Maciej Besta} {and} \bibinfo{person}{Torsten
  Hoefler}.} \bibinfo{year}{2014}\natexlab{}.
\newblock \showarticletitle{Slim fly: A cost effective low-diameter network
  topology}. In \bibinfo{booktitle}{\emph{SC'14: proceedings of the
  international conference for high performance computing, networking, storage
  and analysis}}. IEEE, \bibinfo{pages}{348--359}.
\newblock


\bibitem[Cai et~al\mbox{.}(2021)]%
        {SCCL}
\bibfield{author}{\bibinfo{person}{Zixian Cai}, \bibinfo{person}{Zhengyang
  Liu}, \bibinfo{person}{Saeed Maleki}, \bibinfo{person}{Madanlal Musuvathi},
  \bibinfo{person}{Todd Mytkowicz}, \bibinfo{person}{Jacob Nelson}, {and}
  \bibinfo{person}{Olli Saarikivi}.} \bibinfo{year}{2021}\natexlab{}.
\newblock \showarticletitle{Synthesizing optimal collective algorithms}. In
  \bibinfo{booktitle}{\emph{Proceedings of the 26th ACM SIGPLAN Symposium on
  Principles and Practice of Parallel Programming}}. \bibinfo{pages}{62--75}.
\newblock


\bibitem[Cohen et~al\mbox{.}(2019)]%
        {Cohen19}
\bibfield{author}{\bibinfo{person}{Michael~B. Cohen}, \bibinfo{person}{Yin~Tat
  Lee}, {and} \bibinfo{person}{Zhao Song}.} \bibinfo{year}{2019}\natexlab{}.
\newblock \showarticletitle{Solving Linear Programs in the Current Matrix
  Multiplication Time}. In \bibinfo{booktitle}{\emph{Proceedings of the 51st
  Annual ACM SIGACT Symposium on Theory of Computing}} (Phoenix, AZ, USA)
  \emph{(\bibinfo{series}{STOC 2019})}. \bibinfo{publisher}{Association for
  Computing Machinery}, \bibinfo{address}{New York, NY, USA},
  \bibinfo{pages}{938?942}.
\newblock
\showISBNx{9781450367059}
\urldef\tempurl%
\url{https://doi.org/10.1145/3313276.3316303}
\showDOI{\tempurl}


\bibitem[Czechowski et~al\mbox{.}(2012)]%
        {a2a-comm-complexity}
\bibfield{author}{\bibinfo{person}{Kenneth Czechowski}, \bibinfo{person}{Casey
  Battaglino}, \bibinfo{person}{Chris McClanahan}, \bibinfo{person}{Kartik
  Iyer}, \bibinfo{person}{P-K Yeung}, {and} \bibinfo{person}{Richard Vuduc}.}
  \bibinfo{year}{2012}\natexlab{}.
\newblock \showarticletitle{On the communication complexity of 3D FFTs and its
  implications for exascale}. In \bibinfo{booktitle}{\emph{Proceedings of the
  26th ACM international conference on Supercomputing}}.
  \bibinfo{pages}{205--214}.
\newblock


\bibitem[Dally and Seitz(1987)]%
        {dally1987deadlock}
\bibfield{author}{\bibinfo{person}{William~J. Dally} {and}
  \bibinfo{person}{Charles~L. Seitz}.} \bibinfo{year}{1987}\natexlab{}.
\newblock \showarticletitle{Deadlock-free message routing in multiprocessor
  interconnection networks}.
\newblock \bibinfo{journal}{\emph{IEEE Transactions on computers}}
  \bibinfo{volume}{100}, \bibinfo{number}{5} (\bibinfo{year}{1987}),
  \bibinfo{pages}{547--553}.
\newblock


\bibitem[Dally and Towles(2004)]%
        {dally2004principles}
\bibfield{author}{\bibinfo{person}{William~James Dally} {and}
  \bibinfo{person}{Brian~Patrick Towles}.} \bibinfo{year}{2004}\natexlab{}.
\newblock \bibinfo{booktitle}{\emph{Principles and practices of interconnection
  networks}}.
\newblock \bibinfo{publisher}{Elsevier}.
\newblock


\bibitem[Domke et~al\mbox{.}(2011)]%
        {Domke11}
\bibfield{author}{\bibinfo{person}{Jens Domke}, \bibinfo{person}{Torsten
  Hoefler}, {and} \bibinfo{person}{Wolfgang~E. Nagel}.}
  \bibinfo{year}{2011}\natexlab{}.
\newblock \showarticletitle{Deadlock-Free Oblivious Routing for Arbitrary
  Topologies}. In \bibinfo{booktitle}{\emph{2011 IEEE International Parallel \&
  Distributed Processing Symposium}}. \bibinfo{pages}{616--627}.
\newblock
\urldef\tempurl%
\url{https://doi.org/10.1109/IPDPS.2011.65}
\showDOI{\tempurl}


\bibitem[Fleischer(2000)]%
        {fleischer2000approximating}
\bibfield{author}{\bibinfo{person}{Lisa~K Fleischer}.}
  \bibinfo{year}{2000}\natexlab{}.
\newblock \showarticletitle{Approximating fractional multicommodity flow
  independent of the number of commodities}.
\newblock \bibinfo{journal}{\emph{SIAM Journal on Discrete Mathematics}}
  \bibinfo{volume}{13}, \bibinfo{number}{4} (\bibinfo{year}{2000}),
  \bibinfo{pages}{505--520}.
\newblock


\bibitem[Imase and Itoh(1983)]%
        {genkautz}
\bibfield{author}{\bibinfo{person}{Imase} {and} \bibinfo{person}{Itoh}.}
  \bibinfo{year}{1983}\natexlab{}.
\newblock \showarticletitle{A Design for Directed Graphs with Minimum
  Diameter}.
\newblock \bibinfo{journal}{\emph{IEEE Trans. Comput.}} \bibinfo{volume}{C-32},
  \bibinfo{number}{8} (\bibinfo{year}{1983}), \bibinfo{pages}{782--784}.
\newblock
\urldef\tempurl%
\url{https://doi.org/10.1109/TC.1983.1676323}
\showDOI{\tempurl}


\bibitem[Intel(2023)]%
        {oneccl}
\bibfield{author}{\bibinfo{person}{Intel}.} \bibinfo{year}{2023}\natexlab{}.
\newblock \bibinfo{title}{oneAPI Collective Communications Library (oneCCL)}.
\newblock
\newblock
\newblock
\shownote{Available at \url{https://github.com/oneapi-src/oneCCL}, accessed on
  07.15.2023}.


\bibitem[Johnsson and Ho(1989)]%
        {johnsson1989optimum}
\bibfield{author}{\bibinfo{person}{S~Lennart Johnsson} {and}
  \bibinfo{person}{C-T Ho}.} \bibinfo{year}{1989}\natexlab{}.
\newblock \showarticletitle{Optimum broadcasting and personalized communication
  in hypercubes}.
\newblock \bibinfo{journal}{\emph{IEEE Transactions on computers}}
  \bibinfo{volume}{38}, \bibinfo{number}{9} (\bibinfo{year}{1989}),
  \bibinfo{pages}{1249--1268}.
\newblock


\bibitem[Jouppi et~al\mbox{.}(2023)]%
        {jouppi2023tpu}
\bibfield{author}{\bibinfo{person}{Norm Jouppi}, \bibinfo{person}{George
  Kurian}, \bibinfo{person}{Sheng Li}, \bibinfo{person}{Peter Ma},
  \bibinfo{person}{Rahul Nagarajan}, \bibinfo{person}{Lifeng Nai},
  \bibinfo{person}{Nishant Patil}, \bibinfo{person}{Suvinay Subramanian},
  \bibinfo{person}{Andy Swing}, \bibinfo{person}{Brian Towles},
  {et~al\mbox{.}}} \bibinfo{year}{2023}\natexlab{}.
\newblock \showarticletitle{Tpu v4: An optically reconfigurable supercomputer
  for machine learning with hardware support for embeddings}. In
  \bibinfo{booktitle}{\emph{Proceedings of the 50th Annual International
  Symposium on Computer Architecture}}. \bibinfo{pages}{1--14}.
\newblock


\bibitem[Jyothi et~al\mbox{.}(2016)]%
        {jyothi2016measuring}
\bibfield{author}{\bibinfo{person}{Sangeetha~Abdu Jyothi},
  \bibinfo{person}{Ankit Singla}, \bibinfo{person}{P~Brighten Godfrey}, {and}
  \bibinfo{person}{Alexandra Kolla}.} \bibinfo{year}{2016}\natexlab{}.
\newblock \showarticletitle{Measuring and understanding throughput of network
  topologies}. In \bibinfo{booktitle}{\emph{SC'16: Proceedings of the
  International Conference for High Performance Computing, Networking, Storage
  and Analysis}}. IEEE, \bibinfo{pages}{761--772}.
\newblock


\bibitem[Karakostas(2008)]%
        {karakostas2008faster}
\bibfield{author}{\bibinfo{person}{George Karakostas}.}
  \bibinfo{year}{2008}\natexlab{}.
\newblock \showarticletitle{Faster approximation schemes for fractional
  multicommodity flow problems}.
\newblock \bibinfo{journal}{\emph{ACM Transactions on Algorithms (TALG)}}
  \bibinfo{volume}{4}, \bibinfo{number}{1} (\bibinfo{year}{2008}),
  \bibinfo{pages}{1--17}.
\newblock


\bibitem[Khani et~al\mbox{.}(2021)]%
        {sipml}
\bibfield{author}{\bibinfo{person}{Mehrdad Khani}, \bibinfo{person}{Manya
  Ghobadi}, \bibinfo{person}{Mohammad Alizadeh}, \bibinfo{person}{Ziyi Zhu},
  \bibinfo{person}{Madeleine Glick}, \bibinfo{person}{Keren Bergman},
  \bibinfo{person}{Amin Vahdat}, \bibinfo{person}{Benjamin Klenk}, {and}
  \bibinfo{person}{Eiman Ebrahimi}.} \bibinfo{year}{2021}\natexlab{}.
\newblock \showarticletitle{SiP-ML: High-Bandwidth Optical Network
  Interconnects for Machine Learning Training}. In
  \bibinfo{booktitle}{\emph{Proceedings of the 2021 ACM SIGCOMM 2021
  Conference}} (Virtual Event, USA) \emph{(\bibinfo{series}{SIGCOMM '21})}.
  \bibinfo{publisher}{Association for Computing Machinery},
  \bibinfo{address}{New York, NY, USA}, \bibinfo{pages}{657–675}.
\newblock
\showISBNx{9781450383837}
\urldef\tempurl%
\url{https://doi.org/10.1145/3452296.3472900}
\showDOI{\tempurl}


\bibitem[Kim et~al\mbox{.}(2008)]%
        {kim2008technology}
\bibfield{author}{\bibinfo{person}{John Kim}, \bibinfo{person}{Wiliam~J Dally},
  \bibinfo{person}{Steve Scott}, {and} \bibinfo{person}{Dennis Abts}.}
  \bibinfo{year}{2008}\natexlab{}.
\newblock \showarticletitle{Technology-driven, highly-scalable dragonfly
  topology}.
\newblock \bibinfo{journal}{\emph{ACM SIGARCH Computer Architecture News}}
  \bibinfo{volume}{36}, \bibinfo{number}{3} (\bibinfo{year}{2008}),
  \bibinfo{pages}{77--88}.
\newblock


\bibitem[Leighton and Rao(1989)]%
        {leighton1989approximate}
\bibfield{author}{\bibinfo{person}{Tom Leighton} {and} \bibinfo{person}{Satish
  Rao}.} \bibinfo{year}{1989}\natexlab{}.
\newblock \bibinfo{booktitle}{\emph{An approximate max-flow min-cut theorem for
  uniform multicommodity flow problems with applications to approximation
  algorithms}}.
\newblock \bibinfo{type}{{T}echnical {R}eport}.
  \bibinfo{institution}{Massachusetts Inst Of Tech Cambridge Microsystems
  Research Center}.
\newblock


\bibitem[Lepikhin et~al\mbox{.}(2020)]%
        {lepikhin2020gshard}
\bibfield{author}{\bibinfo{person}{Dmitry Lepikhin},
  \bibinfo{person}{HyoukJoong Lee}, \bibinfo{person}{Yuanzhong Xu},
  \bibinfo{person}{Dehao Chen}, \bibinfo{person}{Orhan Firat},
  \bibinfo{person}{Yanping Huang}, \bibinfo{person}{Maxim Krikun},
  \bibinfo{person}{Noam Shazeer}, {and} \bibinfo{person}{Zhifeng Chen}.}
  \bibinfo{year}{2020}\natexlab{}.
\newblock \showarticletitle{Gshard: Scaling giant models with conditional
  computation and automatic sharding}.
\newblock \bibinfo{journal}{\emph{arXiv preprint arXiv:2006.16668}}
  (\bibinfo{year}{2020}).
\newblock


\bibitem[Liu et~al\mbox{.}(2023)]%
        {lightwave-sigcomm}
\bibfield{author}{\bibinfo{person}{Hong Liu}, \bibinfo{person}{Ryohei Urata},
  \bibinfo{person}{Kevin Yasumura}, \bibinfo{person}{Xiang Zhou},
  \bibinfo{person}{Roy Bannon}, \bibinfo{person}{Jill Berger},
  \bibinfo{person}{Pedram Dashti}, \bibinfo{person}{Norm Jouppi},
  \bibinfo{person}{Cedric Lam}, \bibinfo{person}{Sheng Li},
  \bibinfo{person}{Erji Mao}, \bibinfo{person}{Daniel Nelson},
  \bibinfo{person}{George Papen}, \bibinfo{person}{Mukarram Tariq}, {and}
  \bibinfo{person}{Amin Vahdat}.} \bibinfo{year}{2023}\natexlab{}.
\newblock \showarticletitle{Lightwave Fabrics: At-Scale Optical Circuit
  Switching for Datacenter and Machine Learning Systems}. In
  \bibinfo{booktitle}{\emph{Proceedings of the 2021 ACM SIGCOMM 2021
  Conference}} \emph{(\bibinfo{series}{SIGCOMM '23})}.
  \bibinfo{publisher}{Association for Computing Machinery},
  \bibinfo{address}{New York, NY, USA}.
\newblock


\bibitem[Lu et~al\mbox{.}(2021)]%
        {x-nest}
\bibfield{author}{\bibinfo{person}{Yunfeng Lu}, \bibinfo{person}{Huaxi Gu},
  \bibinfo{person}{Xiaoshan Yu}, {and} \bibinfo{person}{Peng Li}.}
  \bibinfo{year}{2021}\natexlab{}.
\newblock \showarticletitle{X-NEST: A Scalable, Flexible, and High-Performance
  Network Architecture for Distributed Machine Learning}.
\newblock \bibinfo{journal}{\emph{Journal of Lightwave Technology}}
  \bibinfo{volume}{39}, \bibinfo{number}{13} (\bibinfo{year}{2021}),
  \bibinfo{pages}{4247--4254}.
\newblock
\urldef\tempurl%
\url{https://doi.org/10.1109/JLT.2021.3073277}
\showDOI{\tempurl}


\bibitem[Mellette et~al\mbox{.}(2020)]%
        {mellette2020expanding}
\bibfield{author}{\bibinfo{person}{William~M Mellette},
  \bibinfo{person}{Rajdeep Das}, \bibinfo{person}{Yibo Guo},
  \bibinfo{person}{Rob McGuinness}, \bibinfo{person}{Alex~C Snoeren}, {and}
  \bibinfo{person}{George Porter}.} \bibinfo{year}{2020}\natexlab{}.
\newblock \showarticletitle{Expanding across time to deliver bandwidth
  efficiency and low latency}. In \bibinfo{booktitle}{\emph{17th USENIX
  Symposium on Networked Systems Design and Implementation (NSDI 20)}}.
  \bibinfo{pages}{1--18}.
\newblock


\bibitem[Microsoft(2022)]%
        {msccl}
\bibfield{author}{\bibinfo{person}{Microsoft}.}
  \bibinfo{year}{2022}\natexlab{}.
\newblock \bibinfo{title}{Microsoft Collective Communication Library}.
\newblock \bibinfo{howpublished}{Available at
  \url{https://github.com/microsoft/msccl}}.
\newblock


\bibitem[Moin and Mahesh(1998)]%
        {moin1998direct}
\bibfield{author}{\bibinfo{person}{Parviz Moin} {and} \bibinfo{person}{Krishnan
  Mahesh}.} \bibinfo{year}{1998}\natexlab{}.
\newblock \showarticletitle{Direct numerical simulation: a tool in turbulence
  research}.
\newblock \bibinfo{journal}{\emph{Annual review of fluid mechanics}}
  \bibinfo{volume}{30}, \bibinfo{number}{1} (\bibinfo{year}{1998}),
  \bibinfo{pages}{539--578}.
\newblock


\bibitem[Mudigere et~al\mbox{.}(2022)]%
        {mudigere2022software}
\bibfield{author}{\bibinfo{person}{Dheevatsa Mudigere}, \bibinfo{person}{Yuchen
  Hao}, \bibinfo{person}{Jianyu Huang}, \bibinfo{person}{Zhihao Jia},
  \bibinfo{person}{Andrew Tulloch}, \bibinfo{person}{Srinivas Sridharan},
  \bibinfo{person}{Xing Liu}, \bibinfo{person}{Mustafa Ozdal},
  \bibinfo{person}{Jade Nie}, \bibinfo{person}{Jongsoo Park}, {et~al\mbox{.}}}
  \bibinfo{year}{2022}\natexlab{}.
\newblock \showarticletitle{Software-hardware co-design for fast and scalable
  training of deep learning recommendation models}. In
  \bibinfo{booktitle}{\emph{Proceedings of the 49th Annual International
  Symposium on Computer Architecture}}. \bibinfo{pages}{993--1011}.
\newblock


\bibitem[Naumov et~al\mbox{.}(2020)]%
        {naumov2020deep}
\bibfield{author}{\bibinfo{person}{Maxim Naumov}, \bibinfo{person}{John Kim},
  \bibinfo{person}{Dheevatsa Mudigere}, \bibinfo{person}{Srinivas Sridharan},
  \bibinfo{person}{Xiaodong Wang}, \bibinfo{person}{Whitney Zhao},
  \bibinfo{person}{Serhat Yilmaz}, \bibinfo{person}{Changkyu Kim},
  \bibinfo{person}{Hector Yuen}, \bibinfo{person}{Mustafa Ozdal},
  {et~al\mbox{.}}} \bibinfo{year}{2020}\natexlab{}.
\newblock \showarticletitle{Deep learning training in facebook data centers:
  Design of scale-up and scale-out systems}.
\newblock \bibinfo{journal}{\emph{arXiv preprint arXiv:2003.09518}}
  (\bibinfo{year}{2020}).
\newblock


\bibitem[{NVIDIA}(2023)]%
        {nccl}
\bibfield{author}{\bibinfo{person}{{NVIDIA}}.} \bibinfo{year}{2023}\natexlab{}.
\newblock \bibinfo{title}{NVIDIA Collective Communication Library ({NCCL})}.
\newblock
\newblock
\newblock
\shownote{Available at \url{https://github.com/NVIDIA/nccl}, accessed on
  7.15.2023}.


\bibitem[{Open MPI}({[n.\,d.]})]%
        {ompi}
\bibfield{author}{\bibinfo{person}{{Open MPI}}.}
  \bibinfo{year}{[n.\,d.]}\natexlab{}.
\newblock \bibinfo{title}{Open Source High Performance Computing}.
\newblock \bibinfo{howpublished}{\url{https://www.open-mpi.org/}}.
\newblock


\bibitem[Pekurovsky(2012)]%
        {p3dfft}
\bibfield{author}{\bibinfo{person}{Dmitry Pekurovsky}.}
  \bibinfo{year}{2012}\natexlab{}.
\newblock \showarticletitle{P3DFFT: A Framework for Parallel Computations of
  Fourier Transforms in Three Dimensions}.
\newblock \bibinfo{journal}{\emph{SIAM Journal on Scientific Computing}}
  \bibinfo{volume}{34}, \bibinfo{number}{4} (\bibinfo{year}{2012}),
  \bibinfo{pages}{C192--C209}.
\newblock
\urldef\tempurl%
\url{https://doi.org/10.1137/11082748X}
\showDOI{\tempurl}
\showeprint{https://doi.org/10.1137/11082748X}


\bibitem[Polatis(2023)]%
        {polatis}
Polatis \bibinfo{year}{2023}\natexlab{}.
\newblock \bibinfo{title}{{Polatis Optical Circuit Switch}}.
\newblock
\newblock
\newblock
\shownote{Available at
  \url{https://www.polatis.com/series-7000-384x384-port-software-controlled-optical-circuitswitch-sdn-enabled.asp}}.


\bibitem[Poutievski et~al\mbox{.}(2022)]%
        {poutievski2022jupiter}
\bibfield{author}{\bibinfo{person}{Leon Poutievski}, \bibinfo{person}{Omid
  Mashayekhi}, \bibinfo{person}{Joon Ong}, \bibinfo{person}{Arjun Singh},
  \bibinfo{person}{Mukarram Tariq}, \bibinfo{person}{Rui Wang},
  \bibinfo{person}{Jianan Zhang}, \bibinfo{person}{Virginia Beauregard},
  \bibinfo{person}{Patrick Conner}, \bibinfo{person}{Steve Gribble},
  {et~al\mbox{.}}} \bibinfo{year}{2022}\natexlab{}.
\newblock \showarticletitle{Jupiter evolving: transforming google's datacenter
  network via optical circuit switches and software-defined networking}. In
  \bibinfo{booktitle}{\emph{Proceedings of the ACM SIGCOMM 2022 Conference}}.
  \bibinfo{pages}{66--85}.
\newblock


\bibitem[Pumma and Vishnu(2021)]%
        {a2a-comm-bottleneck-ml}
\bibfield{author}{\bibinfo{person}{Sarunya Pumma} {and}
  \bibinfo{person}{Abhinav Vishnu}.} \bibinfo{year}{2021}\natexlab{}.
\newblock \showarticletitle{Semantic-Aware Lossless Data Compression for Deep
  Learning Recommendation Model (DLRM)}. In \bibinfo{booktitle}{\emph{2021
  IEEE/ACM Workshop on Machine Learning in High Performance Computing
  Environments (MLHPC)}}. IEEE, \bibinfo{pages}{1--8}.
\newblock


\bibitem[Raiciu et~al\mbox{.}(2011)]%
        {mptcp}
\bibfield{author}{\bibinfo{person}{Costin Raiciu}, \bibinfo{person}{Sebastien
  Barre}, \bibinfo{person}{Christopher Pluntke}, \bibinfo{person}{Adam
  Greenhalgh}, \bibinfo{person}{Damon Wischik}, {and} \bibinfo{person}{Mark
  Handley}.} \bibinfo{year}{2011}\natexlab{}.
\newblock \showarticletitle{Improving Datacenter Performance and Robustness
  with Multipath TCP}. In \bibinfo{booktitle}{\emph{Proceedings of the ACM
  SIGCOMM 2011 Conference}} (Toronto, Ontario, Canada)
  \emph{(\bibinfo{series}{SIGCOMM '11})}. \bibinfo{publisher}{Association for
  Computing Machinery}, \bibinfo{address}{New York, NY, USA},
  \bibinfo{pages}{266?277}.
\newblock
\showISBNx{9781450307970}
\urldef\tempurl%
\url{https://doi.org/10.1145/2018436.2018467}
\showDOI{\tempurl}


\bibitem[Scott(1991)]%
        {scott1991efficient}
\bibfield{author}{\bibinfo{person}{David~S Scott}.}
  \bibinfo{year}{1991}\natexlab{}.
\newblock \showarticletitle{Efficient all-to-all communication patterns in
  hypercube and mesh topologies}. In \bibinfo{booktitle}{\emph{The Sixth
  Distributed Memory Computing Conference, 1991. Proceedings}}. IEEE Computer
  Society, \bibinfo{pages}{398--399}.
\newblock


\bibitem[Shah et~al\mbox{.}(2023)]%
        {TACCL}
\bibfield{author}{\bibinfo{person}{Aashaka Shah}, \bibinfo{person}{Vijay
  Chidambaram}, \bibinfo{person}{Meghan Cowan}, \bibinfo{person}{Saeed Maleki},
  \bibinfo{person}{Madan Musuvathi}, \bibinfo{person}{Todd Mytkowicz},
  \bibinfo{person}{Jacob Nelson}, \bibinfo{person}{Olli Saarikivi}, {and}
  \bibinfo{person}{Rachee Singh}.} \bibinfo{year}{2023}\natexlab{}.
\newblock \showarticletitle{{TACCL}: Guiding Collective Algorithm Synthesis
  using Communication Sketches}. In \bibinfo{booktitle}{\emph{20th USENIX
  Symposium on Networked Systems Design and Implementation (NSDI 23)}}.
  \bibinfo{publisher}{USENIX Association}, \bibinfo{address}{Boston, MA},
  \bibinfo{pages}{593--612}.
\newblock
\showISBNx{978-1-939133-33-5}
\urldef\tempurl%
\url{https://www.usenix.org/conference/nsdi23/presentation/shah}
\showURL{%
\tempurl}


\bibitem[Shahrokhi and Matula(1990)]%
        {shahrokhi1990maximum}
\bibfield{author}{\bibinfo{person}{Farhad Shahrokhi} {and}
  \bibinfo{person}{David~W Matula}.} \bibinfo{year}{1990}\natexlab{}.
\newblock \showarticletitle{The maximum concurrent flow problem}.
\newblock \bibinfo{journal}{\emph{Journal of the ACM (JACM)}}
  \bibinfo{volume}{37}, \bibinfo{number}{2} (\bibinfo{year}{1990}),
  \bibinfo{pages}{318--334}.
\newblock


\bibitem[Singla et~al\mbox{.}(2012)]%
        {singla2012jellyfish}
\bibfield{author}{\bibinfo{person}{Ankit Singla}, \bibinfo{person}{Chi-Yao
  Hong}, \bibinfo{person}{Lucian Popa}, {and} \bibinfo{person}{P~Brighten
  Godfrey}.} \bibinfo{year}{2012}\natexlab{}.
\newblock \showarticletitle{Jellyfish: Networking data centers randomly}. In
  \bibinfo{booktitle}{\emph{Presented as part of the 9th $\{$USENIX$\}$
  Symposium on Networked Systems Design and Implementation ($\{$NSDI$\}$ 12)}}.
  \bibinfo{pages}{225--238}.
\newblock


\bibitem[Skeie et~al\mbox{.}(2002)]%
        {skeie2002layered}
\bibfield{author}{\bibinfo{person}{Tor Skeie}, \bibinfo{person}{Olav Lysne},
  {and} \bibinfo{person}{Ingebj{\o}rg Theiss}.}
  \bibinfo{year}{2002}\natexlab{}.
\newblock \showarticletitle{Layered Shortest Path (LASH) Routing in Irregular
  System Area Networks.}. In \bibinfo{booktitle}{\emph{ipdps}},
  Vol.~\bibinfo{volume}{2}. \bibinfo{pages}{194}.
\newblock


\bibitem[SOUZA(2022)]%
        {rockport-deadlock}
\bibfield{author}{\bibinfo{person}{Evandro~DE SOUZA}.}
  \bibinfo{year}{2022}\natexlab{}.
\newblock \bibinfo{title}{Deadlock-free multipath routing for direct
  interconnect networks}.
\newblock \bibinfo{howpublished}{Available at
  \url{https://patents.google.com/patent/WO2022269357A1/en}}.
\newblock


\bibitem[Suh and Valamanchili(1998)]%
        {suh1998all}
\bibfield{author}{\bibinfo{person}{Young-Joo Suh} {and} \bibinfo{person}{S
  Valamanchili}.} \bibinfo{year}{1998}\natexlab{}.
\newblock \showarticletitle{All to-all communication with minimum start-up
  costs in 2d/3d tori and meshes}.
\newblock \bibinfo{journal}{\emph{IEEE Transactions on Parallel and Distributed
  Systems}} \bibinfo{volume}{9}, \bibinfo{number}{5} (\bibinfo{year}{1998}),
  \bibinfo{pages}{442--458}.
\newblock


\bibitem[Telescent(2021)]%
        {patchpanel}
\bibfield{author}{\bibinfo{person}{Telescent}.}
  \bibinfo{year}{2021}\natexlab{}.
\newblock \bibinfo{title}{G4 Network Topology Manager}.
\newblock
\newblock
\urldef\tempurl%
\url{https://www.telescent.com/products}
\showURL{%
\tempurl}


\bibitem[TRUONG and TAKANO(2021)]%
        {hybrid-opt}
\bibfield{author}{\bibinfo{person}{Thao-Nguyen TRUONG} {and}
  \bibinfo{person}{Ryousei TAKANO}.} \bibinfo{year}{2021}\natexlab{}.
\newblock \showarticletitle{Hybrid Electrical/Optical Switch Architectures for
  Training Distributed Deep Learning in Large-Scale}.
\newblock \bibinfo{journal}{\emph{IEICE Transactions on Information and
  Systems}} \bibinfo{volume}{E104.D}, \bibinfo{number}{8}
  (\bibinfo{year}{2021}), \bibinfo{pages}{1332--1339}.
\newblock
\urldef\tempurl%
\url{https://doi.org/10.1587/transinf.2020EDP7201}
\showDOI{\tempurl}


\bibitem[Valadarsky et~al\mbox{.}(2016)]%
        {valadarsky2016xpander}
\bibfield{author}{\bibinfo{person}{Asaf Valadarsky}, \bibinfo{person}{Gal
  Shahaf}, \bibinfo{person}{Michael Dinitz}, {and} \bibinfo{person}{Michael
  Schapira}.} \bibinfo{year}{2016}\natexlab{}.
\newblock \showarticletitle{Xpander: Towards optimal-performance datacenters}.
  In \bibinfo{booktitle}{\emph{Proceedings of the 12th International on
  Conference on emerging Networking EXperiments and Technologies}}.
  \bibinfo{pages}{205--219}.
\newblock


\bibitem[Wang et~al\mbox{.}(2023a)]%
        {wang2023topoopt}
\bibfield{author}{\bibinfo{person}{Weiyang Wang}, \bibinfo{person}{Moein
  Khazraee}, \bibinfo{person}{Zhizhen Zhong}, \bibinfo{person}{Manya Ghobadi},
  \bibinfo{person}{Zhihao Jia}, \bibinfo{person}{Dheevatsa Mudigere},
  \bibinfo{person}{Ying Zhang}, {and} \bibinfo{person}{Anthony Kewitsch}.}
  \bibinfo{year}{2023}\natexlab{a}.
\newblock \showarticletitle{$\{$TopoOpt$\}$: Co-optimizing Network Topology and
  Parallelization Strategy for Distributed Training Jobs}. In
  \bibinfo{booktitle}{\emph{20th USENIX Symposium on Networked Systems Design
  and Implementation (NSDI 23)}}. \bibinfo{pages}{739--767}.
\newblock


\bibitem[Wang et~al\mbox{.}(2023b)]%
        {topoopt}
\bibfield{author}{\bibinfo{person}{Weiyang Wang}, \bibinfo{person}{Moein
  Khazraee}, \bibinfo{person}{Zhizhen Zhong}, \bibinfo{person}{Manya Ghobadi},
  \bibinfo{person}{Zhihao Jia}, \bibinfo{person}{Dheevatsa Mudigere},
  \bibinfo{person}{Ying Zhang}, {and} \bibinfo{person}{Anthony Kewitsch}.}
  \bibinfo{year}{2023}\natexlab{b}.
\newblock \showarticletitle{$\{$TopoOpt$\}$: Co-optimizing Network Topology and
  Parallelization Strategy for Distributed Training Jobs}. In
  \bibinfo{booktitle}{\emph{20th USENIX Symposium on Networked Systems Design
  and Implementation (NSDI 23)}}. \bibinfo{pages}{739--767}.
\newblock


\bibitem[Yang and Wang(1999)]%
        {yang1999efficient}
\bibfield{author}{\bibinfo{person}{Yuanyuan Yang} {and}
  \bibinfo{person}{Jianchao Wang}.} \bibinfo{year}{1999}\natexlab{}.
\newblock \showarticletitle{Efficient all-to-all broadcast in all-port mesh and
  torus networks}. In \bibinfo{booktitle}{\emph{Proceedings Fifth International
  Symposium on High-Performance Computer Architecture}}. IEEE,
  \bibinfo{pages}{290--299}.
\newblock


\bibitem[Young et~al\mbox{.}(2022)]%
        {young2022spectralfly}
\bibfield{author}{\bibinfo{person}{Stephen Young}, \bibinfo{person}{Sinan
  Aksoy}, \bibinfo{person}{Jesun Firoz}, \bibinfo{person}{Roberto Gioiosa},
  \bibinfo{person}{Tobias Hagge}, \bibinfo{person}{Mark Kempton},
  \bibinfo{person}{Juan Escobedo}, {and} \bibinfo{person}{Mark Raugas}.}
  \bibinfo{year}{2022}\natexlab{}.
\newblock \showarticletitle{Spectralfly: Ramanujan graphs as flexible and
  efficient interconnection networks}. In \bibinfo{booktitle}{\emph{2022 IEEE
  International Parallel and Distributed Processing Symposium (IPDPS)}}. IEEE,
  \bibinfo{pages}{1040--1050}.
\newblock


\bibitem[Zhao et~al\mbox{.}(2022a)]%
        {zhao2022optimal}
\bibfield{author}{\bibinfo{person}{Liangyu Zhao}, \bibinfo{person}{Siddharth
  Pal}, \bibinfo{person}{Tapan Chugh}, \bibinfo{person}{Weiyang Wang},
  \bibinfo{person}{Prithwish Basu}, \bibinfo{person}{Joud Khoury}, {and}
  \bibinfo{person}{Arvind Krishnamurthy}.} \bibinfo{year}{2022}\natexlab{a}.
\newblock \showarticletitle{Optimal Direct-Connect Topologies for Collective
  Communications}.
\newblock \bibinfo{journal}{\emph{arXiv preprint arXiv:2202.03356}}
  (\bibinfo{year}{2022}).
\newblock


\bibitem[Zhao et~al\mbox{.}(2022b)]%
        {zhao2022efficient}
\bibfield{author}{\bibinfo{person}{Liangyu Zhao}, \bibinfo{person}{Siddharth
  Pal}, \bibinfo{person}{Tapan Chugh}, \bibinfo{person}{Weiyang Wang},
  \bibinfo{person}{Jason Fantl}, \bibinfo{person}{Prithwish Basu},
  \bibinfo{person}{Joud Khoury}, {and} \bibinfo{person}{Arvind Krishnamurthy}.}
  \bibinfo{year}{2022}\natexlab{b}.
\newblock \showarticletitle{Efficient Direct-Connect Topologies for Collective
  Communications}.
\newblock \bibinfo{journal}{\emph{arXiv preprint arXiv:2202.03356}}
  (\bibinfo{year}{2022}).
\newblock


\bibitem[Zhu et~al\mbox{.}(2022)]%
        {sip-switch}
\bibfield{author}{\bibinfo{person}{Ziyi Zhu}, \bibinfo{person}{Min~Yee Teh},
  \bibinfo{person}{Zhenguo Wu}, \bibinfo{person}{Madeleine~Strom Glick},
  \bibinfo{person}{Shijia Yan}, \bibinfo{person}{Maarten Hattink}, {and}
  \bibinfo{person}{Keren Bergman}.} \bibinfo{year}{2022}\natexlab{}.
\newblock \showarticletitle{Distributed deep learning training using silicon
  photonic switched architectures}.
\newblock \bibinfo{journal}{\emph{APL Photonics}} \bibinfo{volume}{7},
  \bibinfo{number}{3} (\bibinfo{year}{2022}), \bibinfo{pages}{030901}.
\newblock
\urldef\tempurl%
\url{https://doi.org/10.1063/5.0070711}
\showDOI{\tempurl}


\end{thebibliography}

\end{document}